\newtheorem{lemma}{Lemma}
\newtheorem{remark}{Remark}
\DeclareMathOperator*{\argmax}{argmax}
\DeclareMathOperator*{\argmin}{argmin}
\begin{document}
%
\title{Optimal preprocessing of WiFi CSI for sensing applications}
\author{Vishnu V. Ratnam, \IEEEmembership{Senior Member,~IEEE}, Hao Chen, \IEEEmembership{Member,~IEEE}, Hao Hsuan Chang, \IEEEmembership{Member,~IEEE}, \\ Abhishek Sehgal, \IEEEmembership{Member,~IEEE}, Jianzhong (Charlie) Zhang, \IEEEmembership{Fellow,~IEEE}
\thanks{All authors are with the Standards and Mobility Innovation Lab, Samsung Research America, Plano, Texas, USA. Email: ratnamvishnuvardhan@gmail.com}}
\maketitle

\begin{abstract}
Due to its ubiquitous and contact-free nature, the use of WiFi infrastructure for performing sensing tasks has tremendous potential. However, the channel state information (CSI) measured by a WiFi receiver suffers from errors in both its gain and phase, which can significantly hinder sensing tasks. By analyzing these errors from different WiFi receivers, a mathematical model for these gain and phase errors is developed in this work. Based on these models, several theoretically justified preprocessing algorithms for correcting such errors at a receiver and, thus, obtaining clean CSI are presented. Simulation results show that at typical system parameters, the developed algorithms for cleaning CSI can reduce noise by $40$\% and $200$\%, respectively, compared to baseline methods for gain correction and phase correction, without significantly impacting computational cost. The superiority of the proposed methods is also validated in a real-world test bed for respiration rate monitoring (an example sensing task), where they improve the estimation signal-to-noise ratio by $20$\% compared to baseline methods.
\end{abstract}

\begin{IEEEkeywords}
WiFi sensing, Wireless sensing, CSI preprocessing, Respiration rate estimation, Smart home.
\end{IEEEkeywords}

\section{Introduction} \label{sec_intro}
With the advent of the internet of things and the ubiquitous availability of WiFi infrastructure (access points and stations), WiFi-based wireless sensing has become a very hot and upcoming topic \cite{Jiang2018, Ma2019}. In wireless sensing, a transmitter (TX) periodically transmits a known signal, and the receiver (RX) uses the received signal to track temporal variations in the channel and, correspondingly, in the ambient environment. This has many applications including presence detection, exercise monitoring, people counting, intruder alarm, respiration rate detection, sleep monitoring, etc. \cite{Fang2018, Charlton2018, Regev2021, Natarajan2021}. 

A typical difficulty of WiFi devices is that the TX and RX do not have timing and carrier synchronization, due to which the channel estimate at the RX, referred to as channel state information (CSI), suffers from random phase errors. In addition, the impact of variable gain circuits in the RX causes random amplitude fluctuations in the channel estimate at the RX. Since these gain and phase errors are not easily separable from the variations in the ``true" channel (which captures the variations in the environment relevant to sensing), wireless sensing using WiFi devices is very challenging. Prior works have devised several approaches to tackle these errors. 

\subsubsection{Accounting for gain errors} In a first approach, metrics robust to gain errors are used for sensing. Among them, \cite{Wang2017b, Wang2017c} only CSI phase information for sensing. Some works assume the gain error across RX antennas to be correlated and thus use ratio of CSI at two antennas as the metric \cite{Zeng2019}. 
In a second approach, the gain errors are explicitly estimated and removed. The most popular one assumes the CSI power variations to only be caused by gain errors and uses the square root of the CSI power as an estimate of the variable gain \cite{Chen2018, Hu2022}. In \cite{Liu2021}, a DBSCAN-based clustering algorithm is proposed to identify CSI values belonging to different discrete gain levels, and uses the square root of the cluster centers as an estimate of the variable gain. In \cite{Niu2021}, a method to use side information such as RSSI and gain coefficient information to estimate the CSI gain is proposed. 
In a third approach, the gain errors are smoothed out by filtering the CSI amplitude. Among them, \cite{Liu2014, Liu2015} use a Hampel filter-based outlier removal and perform interpolation, \cite{Zeng2018, Dou2021} use a Savitzky Golay filter, \cite{Sameera2018} uses a discrete wavelet filter and \cite{Wang2018} uses a median average filter on the CSI amplitude.

\subsubsection{Accounting for phase errors} In a first approach, metrics robust to phase errors are used for sensing. Among them, \cite{Liu2014, Liu2015, Sameera2018} use only CSI amplitude information for sensing. 
Exploiting the fact that phase-errors across RX antennas are correlated, \cite{Wang2017b, Wang2017c} use phase difference, \cite{Zeng2018} uses the conjugate product and \cite{Zeng2019} uses the ratio, respectively, of the CSI at two RX antennas as the sensing metric. In \cite{Chen2018}, a time reversal resonating strength metric is proposed to filter out pair-wise phase differences between any two CSIs. 
In a second approach, the phase errors are explicitly estimated and removed. A popular method is to perform a least-squares fit on the unwrapped CSI phase to estimate out the phase errors \cite{Kotaru2015, Ma2018, Wang2018, Xie2019, Dou2021, Li2022}.

Approach 1 for gain or phase correction does not utilize the full information encoded in CSI, and is thus sub-optimal. It also uses some non-linear metrics of CSI which can only cater to specific sensing applications, and can't replicate the true channel. 
Many of these also have restrictions, such as requiring at least two RX antennas. Approach 3 for gain correction can reduce gain variations but at the cost of distorting the CSI. While approach 2 of gain and phase correction is versatile, can potentially replicate the true channel (thus, allowing use of both the amplitude and phase information), existing solutions are heuristic and do not exploit the structure present in the gain and phase errors. Furthermore, although their performance for some specific sensing tasks has been investigated, a comparative study of how well these methods can replicate the true CSI has not been undertaken.

In this work, we develop a mathematical model for the time-varying true channel, and also for the RX-induced gain error and the synchronization-based phase errors between the TX and RX, which are validated by experiments. Based on these models, we derive multiple theoretically justified methods for approach 2, to estimate and remove the CSI gain and phase errors. Subsequently, we perform a detailed performance comparison of the different gain and phase correction methods for replicating the true CSI and improving the accuracy of an example sensing task. The contributions of this work are as follows:
\begin{itemize}
\item We propose a detailed mathematical WiFi CSI model that takes into consideration RX gain error and synchronization-induced phase error. 
\item Leveraging the distribution of the gain error, we propose several preprocessing algorithms to estimate and remove the gain error in the CSI, including a maximum-likelihood (ML) estimator. 
\item We propose several preprocessing algorithms to remove the phase error present in WiFi CSI, which offer different trade-offs between accuracy and computational complexity. Also among them is a conditionally optimal ML estimator. 
\item We validate the models for the gain error and synchronization-induced phase error using CSI measurements from two real-world WiFi test beds.
\item We perform a detailed performance evaluation of the proposed algorithms for gain and phase error estimation on simulated CSI data under varying scenarios. 
\item Finally, we study the performance of these CSI preprocessing algorithms in a real-world test bed for an example sensing problem, viz., estimating the respiration rate of a stationary subject. 
\end{itemize}

The organization of the paper is as follows: the system model is discussed in Section \ref{sec_chan_model}; the modeling of the CSI gain errors  and the estimation and removal of gain errors are discussed in \ref{sec_gain_correct}; the modeling of the CSI phase errors  and the estimation and removal of phase errors are discussed in \ref{sec_CPE_timing_correct}; the evaluation results on both simulated data and real-world data are provided in Section \ref{sec_eval_results}; and the conclusions and future directions are summarized in Section \ref{sec_conclusions}.

\textbf{Notation:} scalars are represented by light-case letters; sets by light-case calligraphic or double bold letters; and functions by scripted letters. Additionally, ${\rm j} = \sqrt{-1}$, ${a}^{*}$ is the complex conjugate of a complex scalar $a$, $\angle a$ represents the phase angle of a complex scalar $a$, $\lfloor a \rfloor$ is the floor function on a real scalar $a$, $| a |$ is the magnitude of a complex scalar $a$ and $\mathrm{Re}\{a\}$ is the real component of a complex scalar $a$. Additionally, $\mathrm{mod}\{a, b\}$ is the modulus function that gives the remainder after dividing $a$ by $b$, $\mathbb{P}(\cdot)$ represents the probability density function, $\mathrm{Uni}[a,b]$ represents a uniform distribution in the range $[a, b]$, $\mathcal{CN}(\mu, \sigma^2)$ represents a circularly-symmetric complex Gaussian distribution with mean $\mu$ and variance $\sigma^2$. Finally, $\mathbb{Z}$ is the set of integers, $\mathbb{R}$ is the set of real numbers, and $\mathbb{C}$ is the set of complex numbers. 

\section{System model} \label{sec_chan_model}
We consider a system setup that has a stationary, single-antenna WiFi access point as the TX and a stationary, single-antenna WiFi station as the RX, as shown in Fig.~\ref{Fig_sytem_illustrate}.\footnote{The results can easily be extended to multiple antennas by repeating the proposed preprocessing methods for CSI from each TX-RX antenna pair.} To enable sensing of any variations in the environment, the TX transmits a sequence of $P$ WiFi channel state information (CSI) acquisition frames\footnote{Such a frame can be, for example, a null data packets \cite{IEEEWiFi2020}.}, indexed as $\mathcal{P} = \{p \in \mathbb{Z} | 0 \leq p < P\}$, that are transmitted periodically at an interval of $T_{\rm rep}$ seconds each. The header and the payload of these CSI frames are encoded using orthogonal frequency division multiplexing (OFDM) over $K$ sub-carriers indexed as: $\mathcal{K} = \{k \in \mathbb{Z} |0 \leq k \leq K\}$ with a symbol duration of $T_{\rm s}$, and are transmitted at a carrier frequency $f_{\rm c}$. For ease of analysis, we assume that the TX oscillator frequency $f_{\rm c}$ doesn't drift significantly across the $P$ CSI frames. The header of the CSI acquisition frame consists of a legacy short training field (L-STF), a legacy long training field (L-LTF), and one or more long training fields (LTFs) each of size one OFDM symbol \cite{IEEEWiFi2020}, as depicted in Fig.~\ref{Fig_sytem_illustrate}. Each transmitted CSI frame passes through a time-varying channel before reaching the RX. 
\begin{figure} 
\includegraphics[width= 0.5\textwidth]{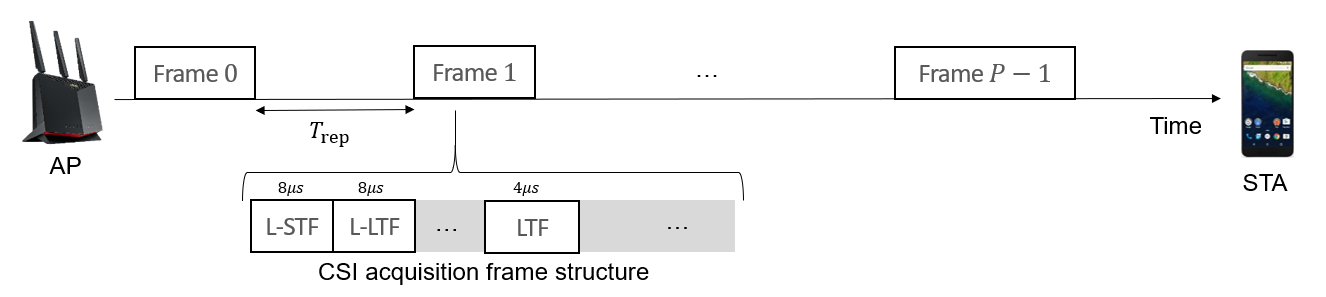}
\caption{An illustration of system model depicting an WiFi access-point, a WiFi station and the CSI acquisition frame structure.}
\label{Fig_sytem_illustrate}
\end{figure}

Let ${h}_{p,k}$ be the frequency-domain channel (that captures the variation to be sensed) between the TX and RX on sub-carrier $k$ for CSI frame $p$. Since the TX and RX are stationary, we model ${h}_{p,k}$ as:
\begin{eqnarray} \label{eqn_channel_resp}
{h}_{p,k} \triangleq {b}_{k} + {d}_{p,k}, 
\end{eqnarray}
where ${b}_{k}$ captures the static component of the channel (i.e., component independent of $p$) corresponding to the line-of-sight path, MPCs that are reflected from walls and static objects, etc., and ${d}_{p,k}$ is the dynamic component of the channel that changes with $p$ and captures the channel variations to be sensed. 
For ease of analysis and to keep the model applicable to a variety of sensing applications, we do not impose any structure on the static component, except that $\angle[ \sum_{k} {b}_{k} {b}_{k+1}^{*}]=0$, without loss of generality. We also define $\sum_k {|{b}_{k}|}^2 \big/ K = \gamma$, where $0 \leq \gamma \leq 1$ is a parameter determining the fraction of channel power in the static component. Similarly, we do not impose any limitation on the Doppler spectrum for the dynamic component ${d}_{p,k}$, and model it as ${d}_{p,k} \sim \mathcal{CN}(0, 1-\gamma)$, with it being independent and identically distributed (i.i.d.) for each $k \in \mathcal{K}$ and $p \in \mathcal{P}$.\footnote{In the simulations, we shall however also consider cases where ${d}_{p,k}$ is sparse in the Doppler domains.} For the theoretical derivations, we shall assume a strongly static channel, i.e., $\gamma \approx 1$ in \eqref{eqn_channel_resp}. 

At the RX, the received signal for CSI frame $p$ is amplified, mixed with a local oscillator signal to convert to base-band, and then sampled using a dedicated analog-to-digital converter (ADC). Note that to ensure the input to the ADC is operating with the right dynamic range, the RX circuit provides a variable gain $g_{p}$. The sampled signal from the ADC is then used to perform symbol start time detection and carrier frequency offset (CFO) compensation before performing OFDM demodulation. For each CSI acquisition frame $p$, the RX estimates the symbol start time using the L-STF \cite{Schmidl1997, Nasir2016}, and we assume that the estimated symbol start time can have an error ${\tau}_p$. Similarly, the RX performs carrier frequency-offset compensation using the L-STF and L-LTF symbols \cite{Sourour2004}. We assume the carrier frequency-offset compensation to be accurate enough to only result in a residual common phase error (CPE) \cite{Petrovic2007, Ratnam2020} ${\psi}_p$. Finally, after the compensation, OFDM demodulation of the frame is performed. The OFDM demodulation output for the LTF symbol is used to estimate the frequency domain channel ${h}_{p,k}$, referred to as channel state information (CSI) on sub-carrier $k$. For any $p \in \mathcal{P}, k \in \mathcal{K}$, this CSI can be modeled as:
\begin{eqnarray} \label{eqn_H_k_est}
\widetilde{{h}}_{p,k} &=& g_{p} {h}_{p,k} e^{-{\rm j} 2 \pi f_k \tau_p} e^{-{\rm j} {\psi}_p}, 
\end{eqnarray}
where $f_k = k/T_{\rm s}$ is the frequency offset of sub-carrier $k$ from the carrier frequency.
Note that, for ease of notation, we do not consider the impact of additive channel noise in \eqref{eqn_H_k_est}. This is without loss of generality because any channel noise can be captured into the dynamic channel component term ${d}_{p,k}$. 
Also note that for $>20$ MHz transmission, the LTF for each $20$MHz segment is rotated by a different phase by the TX as defined by the standard \cite{IEEEWiFi2020}. Here, without loss of generality, we assume that the corresponding de-rotation has been performed by the RX. Also note that non-linearity in the hardware can cause further distortions to \eqref{eqn_H_k_est} \cite{Xie2019}. For ease of analysis, we assume such distortions to be calibrated out apriori.
To help visualize the errors in \eqref{eqn_H_k_est}, we plot the CSI power $\sum_k |\widetilde{h}_{p,k}|/K$ and $\angle \widetilde{h}_{p,k}$ for an example real-world static scenario in Fig.~\ref{Fig_impaired_CSI_illustrate}.
\begin{figure}[!htb]
\centering
\includegraphics[width= 0.24\textwidth]{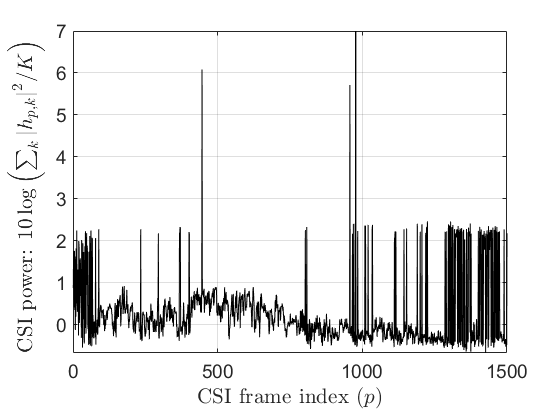}  \hspace{-0.4cm}
\includegraphics[width= 0.25\textwidth]{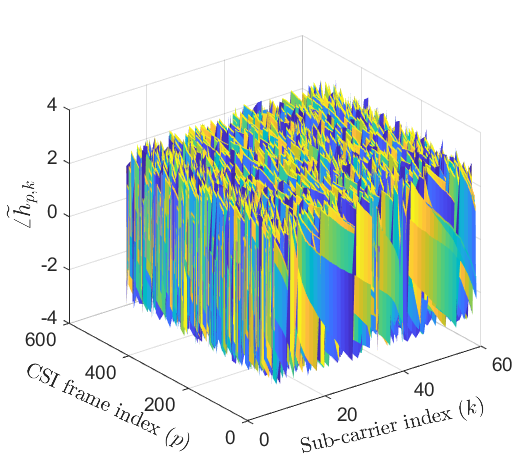}
\caption{A real-world example of CSI power (in decibels) and $\angle \widetilde{h}_{p,k}$ depicting gain and phase errors.}
\label{Fig_impaired_CSI_illustrate}
\end{figure}
In \eqref{eqn_H_k_est}, the gain, timing, and CPE $g_{p}, {\tau}_p, {\psi}_p$ distort the gain and phase of the channel response, which can affect the performance of any sensing algorithm. Therefore, the aim of the paper is to obtain good estimates $\widehat{g}_{p}, \widehat{\tau}_p, \widehat{\psi}_p$ of ${g}_{p}, {\tau}_p, {\psi}_p$, respectively, so that a close approximation to the true channel in \eqref{eqn_channel_resp} can be recovered as:
\begin{eqnarray} \label{eqn_Hk_synchronized}
\widehat{{h}}_{p,k} = \widetilde{{h}}_{p,k} e^{{\rm j} 2 \pi f_k \widehat{\tau}_p} e^{{\rm j} \widehat{\psi}_p} \big/  \widehat{g}_{p}.
\end{eqnarray}
We assume a batch processing framework, where the RX first accumulates the CSI for $P$ CSI acquisition frames $\{\widetilde{{h}}_{p,k} | k \in \mathcal{K}, p \in \mathcal{P}\}$, and then uses them to estimate $\widehat{g}_{p}, \widehat{\tau}_p, \widehat{\psi}_p$ for all $p \in \mathcal{P}$. Key variables of the paper are summarized in Table~\ref{Table_variables}.
\begin{table}[h!]
\centering
\caption{A summary of key variables of the paper.} \label{Table_variables}
\renewcommand{\arraystretch}{1.5} 
\begin{tabular}{| c | l |} 
\hline 
Variable & Description. \\
\hline
$P, K$ & Number of frames, number of OFDM sub-carriers. \\
\hline
$p, k$ & Current frame index, the OFDM sub-carrier index. \\
\hline
$T_{\rm rep}, f_k$ & Inter-frame time, sub-carrier frequency. \\
\hline
$h_{p,k}, \Gamma_p$ & True CSI \eqref{eqn_channel_resp} and its power in dB \eqref{eqn_gamma_true_define}. \\
\hline
$b_{k}, \bar{b}_k$ & CSI static component \eqref{eqn_channel_resp} and its estimate (Sec \ref{sec_CPE_timing_correct}). \\
\hline
$\gamma$ & Fraction of CSI power in static component. \\
\hline
$\widetilde{h}_{p,k}, \widetilde{\Gamma}_p$ & The observed CSI \eqref{eqn_H_k_est} and its power in dB \eqref{eqn_gamma_define}. \\
\hline
$g_p, \widehat{g}_p $ & Gain error \eqref{eqn_H_k_est} and its estimate: \eqref{eqn_baseline_normPower}, \eqref{eqn_baseline_DBscan}, Algos.\ref{Algo5}, \ref{Algo4}. \\
\hline
$g_p^{(1)}, \widehat{g}^{(1)}_p $ & Large-scale gain \eqref{eqn_gain_model} and its estimate \eqref{eqn_proposed_DBscan2}, Algo.\ref{Algo3}. \\
\hline
$g_p^{(2)}, \widehat{g}^{(2)}_p$ & AGC gain \eqref{eqn_gain_model} and its estimate \eqref{eqn_proposed_DBscan}, \eqref{eqn_gamma_hat}. \\
\hline
$\tau_p, \widehat{\tau}_p$ & Timing error \eqref{eqn_H_k_est} and its estimate (see Sec \ref{sec_CPE_timing_correct}). \\
\hline
$\psi_p, \widehat{\psi}_p$ & Common phase error \eqref{eqn_H_k_est} and its estimate (see Sec \ref{sec_CPE_timing_correct}). \\
\hline
$\bar{h}_{p,k}, \widehat{h}_{p,k}$ & The gain-corrected CSI, the final (cleaned) CSI. \\
\hline
\end{tabular}
\end{table}

\section{Estimation of gain errors} \label{sec_gain_correct}
This section aims to obtain an estimate of the gain term $\widehat{g}_{p}$ for each $p \in \mathcal{P}$, and thus generate the gain-corrected CSI:  
\begin{eqnarray} \label{eqn_CSI_gain_comp}
\bar{{h}}_{p,k} \triangleq \widetilde{{h}}_{p,k} \big/ \widehat{g}_{p},
\end{eqnarray}
for each $k \in \mathcal{K}$, for further processing. 

\subsection{Distribution of gain error} \label{subsec_AGC_gain_dist}
When receiving a CSI frame $p$, the receiver gain $g_{p}$ is a combination of the large-scale gain, which adjusts at a slow-time scale, and an automatic gain control (AGC) gain, which updates at a fast-time scale. Let these two gains in decibel (dB) scale be defined as  $g^{(1)}_{p}$ and  $g^{(2)}_{p}$, respectively, i.e., 
\begin{eqnarray} \label{eqn_gain_model}
20 \times \log_{10} \big( g_{p} \big) =  g^{(1)}_{p} + g^{(2)}_{p}
\end{eqnarray}
We assume these gains to follow the following model:
\begin{itemize}
\item The large-scale gain is a slowly varying random process with a Doppler frequency less than $0.1$ Hz and arises due to change in the gain of low-noise amplifier \cite{Lee2007, Cheng2014}, drift in TX oscillator frequency $f_{\rm c}$, etc.
\item The AGC gain is due to the automatic gain control circuit, which for each $p$ picks a value from a discrete grid, i.e., $g^{(2)}_{p} \in \mathcal{G}$ \cite{Jang_VTC2010, ART001948424}. We make the worst case assumption that $g^{(2)}_{p}$ is independently distributed for each $p$ with a uniform prior over elements of $\mathcal{G}$. 
\end{itemize}
Note that the impact of $g^{(1)}_{p}, g^{(2)}_{p}$ is clearly visible in the CSI power in Fig.~\ref{Fig_impaired_CSI_illustrate}. These modeling assumptions are further validated using real-world data in Section \ref{subsec_CSI_gain_phase_err_dist}.  

\subsection{Baseline methods} \label{subsec_gain_baseline_algos}
As a baseline, we consider two methods used in prior art. One popular method \cite{Chen2018, Wang2017b}  is to assume all variation in the CSI power comes from the gain error, i.e.,
\begin{eqnarray} \label{eqn_baseline_normPower}
\widehat{g}_{p} = {\left[\frac{1}{K} \sum_{k \in \mathcal{K}} {|\widetilde{{h}}_{p,k}|}^2 \right]}^{1/2}
\end{eqnarray}
Note that while this gets rid of the gain error, it removes most of the information about the sensing signal encoded in the CSI amplitude. For example, as shown in \cite{Zeng2018}, sometimes all of the sensing signal is captured only inside the CSI power.

As another baseline, we consider an extension of the clustering based approach proposed in \cite{Liu2021}. In this approach for each CSI frame $p \in \mathcal{P}$ we compute the pre-compensated CSI powers as:
\begin{eqnarray} \label{eqn_gamma_define}
\widetilde{\Gamma}_{p} \triangleq 10 \times \log_{10} \left[ \frac{1}{K} \sum_{k \in \mathcal{K}} {|\widetilde{{h}}_{p,k}|}^2 \right]
\end{eqnarray}
We then run the DBSCAN clustering algorithm \cite{DBScan} on $\{\widetilde{\Gamma}_{p} | p \in \mathcal{P}\}$ to cluster the $P$ samples into different AGC gain levels. For each identified cluster, the mean value of $\widetilde{\Gamma}_{p}$ is used to estimate the gain term for that cluster as: 
\begin{eqnarray} \label{eqn_baseline_DBscan}
\widehat{g}_{p} = {10}^{\sum_{q \in \mathcal{P}_p}\widetilde{\Gamma}_{q} \big/\big( 20 |\mathcal{P}_p|\big)},
\end{eqnarray}
where $\mathcal{P}_p$ is set of all frames $q \in \mathcal{P}$ that are assigned to same cluster as $p$ in DBSCAN. Since AGC gain steps are usually larger than $0.2$ dB, we can set parameters $\epsilon = 0.15$ and $\textrm{min-points} = 1$ in DBSCAN. 
One issue with such clustering is that it can perform quite poorly when the large-scale gain has a drift. Additionally, the clustering doesn't exploit the structure present in ${g}^{(2)}_{p}$ thus limiting the performance. Finally, clustering is computationally expensive requiring significant computation time (see Section \ref{subsec_eval_simulated}).

\subsection{Gain estimation for arbitrary $\mathcal{G}$} \label{subsec_gain_arbitraryG}
Here we assume that $\mathcal{G}$ can be an arbitrary but finite, discrete set. Let us define the CSI power for the true channel as:
\begin{eqnarray} \label{eqn_gamma_true_define}
{\Gamma}_{p} \triangleq 10 \times \log_{10} \left[ \frac{1}{K} \sum_{k \in \mathcal{K}} {|{{h}}_{p,k}|}^2 \right].
\end{eqnarray}
We then have the following observation:
\begin{lemma} \label{remark1}
When $\gamma \approx 1$, the true channel gain ${\Gamma}_{p}$ is i.i.d. Gaussian distributed for each $p \in \mathcal{P}$ with a zero mean and a small variance of $\sigma_{\Gamma}^2 = 100 (1 - \gamma^2)\big/K$. 
\end{lemma}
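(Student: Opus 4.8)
The plan is to compute the distribution of $\Gamma_p = 10\log_{10}\bigl[\tfrac{1}{K}\sum_k |h_{p,k}|^2\bigr]$ in the regime $\gamma \approx 1$ by treating the dynamic component $d_{p,k}$ as a small perturbation of the static component $b_k$. First I would expand $|h_{p,k}|^2 = |b_k + d_{p,k}|^2 = |b_k|^2 + 2\,\mathrm{Re}\{b_k^* d_{p,k}\} + |d_{p,k}|^2$ and sum over $k$. Using the normalization $\sum_k |b_k|^2/K = \gamma$ and the fact that the $d_{p,k}$ are i.i.d.\ $\mathcal{CN}(0,1-\gamma)$, the leading term is $\gamma$, the cross term $\tfrac{2}{K}\sum_k \mathrm{Re}\{b_k^* d_{p,k}\}$ is a zero-mean Gaussian (a linear combination of the Gaussian $d_{p,k}$) with variance $\tfrac{2\gamma(1-\gamma)}{K}$, and the quadratic term $\tfrac{1}{K}\sum_k|d_{p,k}|^2$ concentrates around $1-\gamma$ with fluctuations of order $(1-\gamma)/\sqrt{K}$, which is smaller than the cross term when $\gamma\approx 1$. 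Hence to leading order $\tfrac{1}{K}\sum_k|h_{p,k}|^2 \approx 1 + X_p$ where $X_p$ is a small zero-mean Gaussian of variance $\approx 2(1-\gamma)/K$ (taking $\gamma\approx1$ in the prefactor).

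Next I would linearize the logarithm: since $X_p$ is small, $10\log_{10}(1+X_p) \approx \tfrac{10}{\ln 10} X_p$, so $\Gamma_p$ is approximately Gaussian with zero mean and variance $\bigl(\tfrac{10}{\ln 10}\bigr)^2 \cdot \tfrac{2(1-\gamma)}{K}$. The claimed variance $\sigma_\Gamma^2 = 100(1-\gamma^2)/K = 100(1-\gamma)(1+\gamma)/K$ becomes $\approx 200(1-\gamma)/K$ when $\gamma\approx1$; matching this against $\bigl(\tfrac{10}{\ln 10}\bigr)^2\cdot\tfrac{2(1-\gamma)}{K} = \tfrac{200(1-\gamma)}{K(\ln 10)^2}$ suggests the paper is using the natural-log-free approximation $\log_{10}(1+x)\approx x$ (equivalently absorbing the $1/\ln 10$), or is reporting the variance of $\tfrac{1}{K}\sum_k|h_{p,k}|^2$ scaled by $100$ directly; I would follow whichever convention the surrounding text adopts so that the constant $100(1-\gamma^2)$ emerges, and keep the $(1+\gamma)$ factor by retaining it in the cross-term variance $2\gamma(1-\gamma)/K$ before the final $\gamma\approx1$ substitution in the exponent only. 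Independence across $p$ is immediate because the $d_{p,k}$ are i.i.d.\ across both $p$ and $k$, so $X_p$ (and hence $\Gamma_p$) depends only on $\{d_{p,k}\}_k$, disjoint families for distinct $p$.

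The main obstacle I anticipate is bookkeeping the approximations carefully enough to land exactly on the stated constant $100(1-\gamma^2)/K$ rather than a nearby expression: one must decide which terms to keep at first order in $(1-\gamma)$, justify dropping the quadratic-in-$d$ fluctuation relative to the cross term, and be consistent about whether the $10\log_{10}$ linearization carries a $1/\ln 10$. A clean way to organize this is to write $\Gamma_p = 10\log_{10}(\gamma + Y_p + Z_p)$ with $Y_p$ the cross term and $Z_p$ the centered quadratic term, factor out $\gamma$, expand $10\log_{10}\gamma + 10\log_{10}(1 + (Y_p+Z_p)/\gamma)$, note $10\log_{10}\gamma = O(1-\gamma)$ is the "small mean" that the lemma rounds to zero, and then the variance is dominated by the $Y_p$ contribution, $\mathrm{Var}(Y_p)/\gamma^2 \cdot (\text{const}) = \tfrac{2(1-\gamma)}{K\gamma}\cdot(\text{const})$, which I reconcile with $100(1-\gamma^2)/K$. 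I would present the argument as a first-order perturbation expansion, stating explicitly that equality holds in the limit $\gamma\to1$ (consistent with the "strongly static channel" assumption already invoked in the paper).
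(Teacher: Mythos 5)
Your route is different from the paper's and essentially sound. The paper observes that $\beta_p \triangleq \frac{2}{1-\gamma}\sum_k|b_k+d_{p,k}|^2$ is non-central chi-squared with $2K$ degrees of freedom and non-centrality $\frac{2K\gamma}{1-\gamma}$, reads off its mean $\frac{2K}{1-\gamma}$ and variance $4K\frac{1+\gamma}{1-\gamma}$, invokes asymptotic normality of the chi-squared (large degrees of freedom or large non-centrality), and then Taylor-expands the $\log_{10}$; your perturbative expansion of $|b_k+d_{p,k}|^2$ reaches the same place more elementarily and makes the Gaussianity more transparent, since the dominant fluctuation is an exactly Gaussian linear form in the $d_{p,k}$ rather than an appeal to chi-squared asymptotics. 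Two bookkeeping points. First, to land on $(1-\gamma^2)/K$ exactly rather than $2\gamma(1-\gamma)/K$ you must \emph{keep} the centered quadratic term rather than discard it: its variance is $(1-\gamma)^2/K$, and $2\gamma(1-\gamma)+(1-\gamma)^2=1-\gamma^2$, which is precisely the chi-squared variance the paper uses; retaining $\gamma$ in the cross term alone, as you propose, gives $2\gamma(1-\gamma)\neq(1+\gamma)(1-\gamma)$. (The quadratic term is not Gaussian, so strictly one then needs a CLT over $k$ for that piece, or one accepts it as a correction to the variance only.) Second, your suspicion about the $1/\ln 10$ is correct and is not resolved by any convention in the paper: the paper's own Taylor step writes $10\log_{10}[(1-\gamma)\beta_p/2K]\approx 5[(1-\gamma)\beta_p-2K]/K$, silently dropping the $1/\ln 10$ from the derivative of $\log_{10}$, so the rigorously derived variance is $100(1-\gamma^2)/(K(\ln 10)^2)$ rather than $100(1-\gamma^2)/K$. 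You have in effect uncovered a constant-factor slip in the stated lemma rather than a gap in your own argument. Independence and identical distribution across $p$ are handled the same way in both proofs.
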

\begin{proof}
See Appendix \ref{appdix1p5}.
\end{proof}
Let us also define the uncompensated channel power increments as:
\begin{eqnarray} \label{eqn_delta_gamma_defn}
\Delta \widetilde{\Gamma}_{p} \triangleq \widetilde{\Gamma}_{p} - \widetilde{\Gamma}_{p-1},
\end{eqnarray}
where we use $\widetilde{\Gamma}_{-1}=\widetilde{\Gamma}_{0}$. Using \eqref{eqn_H_k_est}, \eqref{eqn_gain_model}, \eqref{eqn_gamma_define} and \eqref{eqn_gamma_true_define}, we have: 
\begin{eqnarray}
\widetilde{\Gamma}_{p} &=& {\Gamma}_{p} + g^{(1)}_{p} + g^{(2)}_{p} \label{eqn_tilde_gamma_approx} \\
\Rightarrow \Delta \widetilde{\Gamma}_{p} & \approx & {\Gamma}_{p} - {\Gamma}_{p-1} + g^{(2)}_{p} - g^{(2)}_{p-1} , \nonumber
\end{eqnarray}
which is independent of the large-scale gain $g^{(1)}_{p}$ due to its low bandwidth. Using Lemma \ref{remark1}, we can then infer that $\Delta \widetilde{\Gamma}_{p}$ is an estimate of $g^{(2)}_{p} - g^{(2)}_{p-1}$ with an additive zero-mean Gaussian noise. Since $g^{(2)}_{p}$ comes from a finite discrete set $\mathcal{G}$, as a first step we run DBSCAN algorithm on $\{ \Delta \widetilde{\Gamma}_{p} | 1 \leq p < P \}$ to cluster AGC gain increments. Note that unlike \eqref{eqn_baseline_DBscan}, the clustering here is resilient to the large-scale gain variations. For each $p \in \mathcal{P}$ we can then estimate the AGC gain for frame $p$ sequentially as: 
\begin{eqnarray} \label{eqn_proposed_DBscan}
\widehat{g}^{(2)}_{p} = \widehat{g}^{(2)}_{p-1} + \sum_{q \in \mathcal{P}_p} \Delta \widetilde{\Gamma}_{q} \big/ |\mathcal{P}_p|,
\end{eqnarray}
where we use $\widehat{g}^{(2)}_{-1}=0$ and $\mathcal{P}_p$ is set of all frames $q \in \mathcal{P}$ that are assigned to same cluster as $p$ in DBSCAN. The large-scale gain (along with any accumulated error in \eqref{eqn_proposed_DBscan}) is then estimated using a low-pass filter as:
\begin{eqnarray} \label{eqn_proposed_DBscan2}
\widehat{g}^{(1)}_{p} = \textrm{LPF}_{\text{0.1 Hz}}\{ \widetilde{\Gamma}_{p} - \widehat{g}^{(2)}_{p}\},
\end{eqnarray}
where $\textrm{LPF}_{\text{0.1 Hz}}$ is a low-pass filter with cut-off frequency $0.1$ Hz. Finally $\widehat{g}_{p}$ is computed, as summarized in Algorithm \ref{Algo5}. 
\begin{algorithm}
\label{Algo5}
\caption{Gain estimation with arbitrary $\mathcal{G}$}
\begin{algorithmic} 
\STATE Given: $\widetilde{\Gamma}_{p}$ for $p \in \mathcal{P}$
\STATE Compute $\Delta \widetilde{\Gamma}_{p} = \widetilde{\Gamma}_{p} - \widetilde{\Gamma}_{p-1}$ for each $p \in \mathcal{P}$.
\STATE Run DBSCAN on $\{ \Delta \widetilde{\Gamma}_{p} | 1 \leq p < P \}$ with $\epsilon = 0.2$ and $\textrm{min-points} = 1$.
\FOR{$p \in \mathcal{P}$}
\STATE Compute $\widehat{g}^{(2)}_{p}$ from \eqref{eqn_proposed_DBscan}.
\ENDFOR
\STATE Compute $\widehat{g}^{(1)}_{p}$ from \eqref{eqn_proposed_DBscan2} for all $p \in \mathcal{P}$.
\STATE //We use a moving average filter for $\mathrm{LPF}_{\text{0.1 Hz}}$ with one-sided width $6/T_{\rm rep}$.
\STATE Return $\big\{\widehat{g}_{p} = {10}^{(\widehat{g}^{(1)}_{p}+\widehat{g}^{(2)}_{p})/20} \big| p \in \mathcal{P} \big\}$.
\end{algorithmic}
\end{algorithm}

\subsection{Gain estimation for uniformly-spaced $\mathcal{G}$} \label{subsec_gain_uniformG}
In this section, we assume a specific structure on the AGC gain set as: $\mathcal{G} = \{z \lambda | z \in \mathbb{Z}\}$, i.e., the AGC gain ${g}^{(2)}_{p}$ is always in multiples of an unknown constant $\lambda$. This causes the following observation:
\begin{lemma} \label{remark2}
All information in $\widetilde{\Gamma}_{p}$ about ${\Gamma}_{p}$ is captured within: $\mathrm{mod}\{ \widetilde{\Gamma}_{p}, \lambda \}$, and equivalently, within 
\begin{eqnarray} \label{eqn_defn_Xi_tilde}
\widetilde{\Xi}_{p} \triangleq \exp\left\{ 2 \pi {\rm j} \widetilde{\Gamma}_{p} \big/ \lambda \right\}.
\end{eqnarray}
\end{lemma}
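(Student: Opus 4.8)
I would read the phrase ``all information in $\widetilde{\Gamma}_{p}$ about ${\Gamma}_{p}$ is captured within $\mathrm{mod}\{\widetilde{\Gamma}_{p},\lambda\}$'' as the assertion that $\mathrm{mod}\{\widetilde{\Gamma}_{p},\lambda\}$ is a \emph{sufficient statistic} for the unknown ${\Gamma}_{p}$ in the observation model for $\widetilde{\Gamma}_{p}$. The plan is then to (i) write that model down, (ii) exhibit a Fisher--Neyman factorization with statistic $\mathrm{mod}\{\widetilde{\Gamma}_{p},\lambda\}$, and (iii) show $\widetilde{\Xi}_{p}$ is an invertible re-encoding of $\mathrm{mod}\{\widetilde{\Gamma}_{p},\lambda\}$, hence informationally equivalent. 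For (i): working, as throughout the theoretical analysis, with the slowly-varying $g^{(1)}_{p}$ already removed, \eqref{eqn_tilde_gamma_approx} reads $\widetilde{\Gamma}_{p}={\Gamma}_{p}+g^{(2)}_{p}$ with $g^{(2)}_{p}=z_{p}\lambda$, where $z_{p}\in\mathbb{Z}$ is drawn from the worst-case uniform prior of Sec.~\ref{subsec_AGC_gain_dist} independently of ${\Gamma}_{p}$, and Lemma~\ref{remark1} supplies the prior on ${\Gamma}_{p}$ (its precise Gaussian form will not be needed). The one identity I would record first, from the Euclidean decomposition $\widetilde{\Gamma}_{p}=\lambda\lfloor\widetilde{\Gamma}_{p}/\lambda\rfloor+\mathrm{mod}\{\widetilde{\Gamma}_{p},\lambda\}$ together with $z_{p}\lambda$ being an integer multiple of $\lambda$, is
\[
\mathrm{mod}\{\widetilde{\Gamma}_{p},\lambda\}=\mathrm{mod}\{{\Gamma}_{p},\lambda\},\qquad \big\lfloor\widetilde{\Gamma}_{p}/\lambda\big\rfloor=\big\lfloor{\Gamma}_{p}/\lambda\big\rfloor+z_{p},
\]
so that, under the bijection $\widetilde{\Gamma}_{p}\leftrightarrow(\mathrm{mod}\{\widetilde{\Gamma}_{p},\lambda\},\lfloor\widetilde{\Gamma}_{p}/\lambda\rfloor)$ between $\mathbb{R}$ and $[0,\lambda)\times\mathbb{Z}$, the first coordinate is a deterministic function of ${\Gamma}_{p}$ alone while all of the randomness (the nuisance $z_{p}$) enters only as an additive integer shift of the second coordinate.

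For (ii) I would compute the likelihood of an observation $\widetilde{\Gamma}_{p}=\gamma$ given ${\Gamma}_{p}=x$: since $\widetilde{\Gamma}_{p}$ then lives on the coset $x+\lambda\mathbb{Z}$ with equal prior mass on each atom, this likelihood equals $h(\gamma)\cdot\mathbbm{1}\big[\mathrm{mod}\{\gamma,\lambda\}=\mathrm{mod}\{x,\lambda\}\big]$, i.e.\ a constant in the data times a factor depending on the data only through $\mathrm{mod}\{\gamma,\lambda\}$; equivalently, adding the $x$-dependent offset $\lfloor x/\lambda\rfloor$ to the uniform-over-$\mathbb{Z}$ variable $z_{p}$ leaves it uniform over $\mathbb{Z}$, so $\lfloor\widetilde{\Gamma}_{p}/\lambda\rfloor$ is conditionally independent of ${\Gamma}_{p}$ given $\mathrm{mod}\{\widetilde{\Gamma}_{p},\lambda\}$. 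By the Fisher--Neyman factorization theorem this is exactly sufficiency. As a sanity check I would also write the posterior directly: $\mathbb{P}({\Gamma}_{p}\mid\widetilde{\Gamma}_{p}=\gamma)$ is supported on $\{\mathrm{mod}\{\gamma,\lambda\}+n\lambda:n\in\mathbb{Z}\}$ with weights proportional to the Lemma~\ref{remark1} prior evaluated there --- a law that visibly depends on $\gamma$ only through $\mathrm{mod}\{\gamma,\lambda\}$. (In the strongly-static regime $\sigma_{\Gamma}\ll\lambda$, $\mathrm{mod}\{\widetilde{\Gamma}_{p},\lambda\}$ is moreover an essentially error-free readout of ${\Gamma}_{p}$, the residual $\lambda$-ambiguity having vanishing probability.)

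For (iii): the map $r\mapsto\exp\{2\pi{\rm j} r/\lambda\}$ is $\lambda$-periodic, so $\widetilde{\Xi}_{p}$ in \eqref{eqn_defn_Xi_tilde} is a function of $\widetilde{\Gamma}_{p}$ only through $\mathrm{mod}\{\widetilde{\Gamma}_{p},\lambda\}$; and restricted to $[0,\lambda)$ that map is injective onto the unit circle, so $\widetilde{\Xi}_{p}$ and $\mathrm{mod}\{\widetilde{\Gamma}_{p},\lambda\}$ determine each other and carry identical information --- the ``equivalently'' in the statement. I would add the remark that $\widetilde{\Xi}_{p}$ is the preferable representation for the downstream ML estimator, being a continuous (wrap-around-free) function of the data, whereas $\mathrm{mod}\{\cdot,\lambda\}$ has an arbitrary discontinuity.

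The step I expect to need the most care --- really the only delicate point --- is that $\mathcal{G}=\{z\lambda:z\in\mathbb{Z}\}$ is infinite, so the ``uniform prior over $\mathcal{G}$'' is improper and the constant $h$ above is only formal. I would make this rigorous by truncating to $z_{p}$ uniform over $\{-N,\dots,N\}$, where the factorization holds exactly up to $O(1/N)$ boundary corrections from atoms of $x+\lambda\mathbb{Z}$ lying outside the window, and then letting $N\to\infty$; the improper-prior statement of the lemma is precisely this limit. A second, smaller point I would make explicit is that the reduction genuinely uses the absence (or prior removal, e.g.\ via \eqref{eqn_proposed_DBscan2}) of the continuous large-scale gain $g^{(1)}_{p}$: a generic continuous $g^{(1)}_{p}$ would replace the coset $x+\lambda\mathbb{Z}$ by a dense set and destroy the sufficiency of $\mathrm{mod}\{\widetilde{\Gamma}_{p},\lambda\}$. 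Everything else is elementary.
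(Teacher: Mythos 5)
Your proof is correct and takes essentially the same route as the paper's: both rest on the single observation that, under the uniform integer prior on $z_p$, the conditional law of $\widetilde{\Gamma}_{p}$ given ${\Gamma}_{p}$ is $\lambda$-periodic --- the paper records exactly the identity $\mathbb{P}(\widetilde{\Gamma}_{p}=g\mid{\Gamma}_{p})=\mathbb{P}(\widetilde{\Gamma}_{p}=g+\lambda\mid{\Gamma}_{p})$ and concludes. Your added detail (the Fisher--Neyman factorization, the bijection between $\mathrm{mod}\{\widetilde{\Gamma}_{p},\lambda\}$ and $\widetilde{\Xi}_{p}$, and the truncation argument handling the improper uniform prior over $\mathbb{Z}$) merely makes rigorous what the paper leaves implicit.
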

\begin{proof}
See Appendix \ref{appdix1p5}.
\end{proof}
For convenience, let us assume that the value of the AGC gain step size $\lambda$ is known. A search over different hypothesis of $\lambda$ is performed, as explained later in the section. For a given value of $\lambda$, the proposed method first finds the best estimate of large-scale gain $g^{(1)}_{p}$ and then estimates the gains $\{ \widehat{g}_{p} | p \in \mathcal{P} \}$.

\subsubsection{Estimation of large-scale gain $g^{(1)}_{p}$} 
Using Lemma \ref{remark2}, note that we can expand:
\begin{eqnarray}
\widetilde{\Xi}_{p} = \exp\left\{ {\rm j} 2 \pi \big[{\Gamma}_{p} + g^{(1)}_{p} \big] \big/ \lambda \right\}
\end{eqnarray}
Since ${\Gamma}_{p}$ is zero-mean Gaussian distributed with a small variance (from Lemma \ref{remark1}), it follows that $\Xi_{p}$ undergoes a slow ($\leq 0.1$ Hz) rotation in the complex plane with $p$ due to $g^{(1)}_{p}$. Correspondingly, $g^{(1)}_{p}$ can be estimated by low-pass filtering as shown in Algorithm \ref{Algo3}. For the rest of the section we shall assume $\widehat{g}^{(1)}_{p} \approx g^{(1)}_{p}$, for convenience.
\begin{algorithm}
\caption{Estimation of $g^{(1)}_{p}$} \label{Algo3}
\begin{algorithmic} 
\STATE Given: $\widetilde{\Xi}_{p}$ for $p \in \mathcal{P}$.
\STATE Compute $\bar{\Xi}_{p} = \mathrm{LPF}_{\text{0.1 Hz}}\{ \widetilde{\Xi}_{p} \}$.
\STATE //We use a moving average filter for $\mathrm{LPF}_{\text{0.1 Hz}}$ with one-sided width $6/T_{\rm rep}$.
\STATE Return $\widehat{g}^{(1)}_{p} = \mathscr{U} \big[ \angle \bar{\Xi}_{p} \big] \lambda / 2\pi$ for $p \in \mathcal{P}$.
\STATE // Here $\mathscr{U}(\cdot)$ is the phase unwrapping function that for each $p \in \mathcal{P}$ adds integer shifts of $2 \pi$ to an argument to ensure that $|\angle\bar{\Xi}_{p} - \angle\bar{\Xi}_{p-1}| \leq \pi$.
\end{algorithmic}
\end{algorithm}

\subsubsection{Estimation of $\widehat{g}_{p}$ for each $p$} 
Given estimate $\widehat{g}^{(1)}_{p}$ from Algorithm \ref{Algo3} is accurate, we obtain from \eqref{eqn_tilde_gamma_approx}:
\begin{eqnarray}
\widetilde{\Gamma}_{p} - \widehat{g}^{(1)}_{p} \approx \Gamma_{p} + {g}^{(2)}_{p}. \nonumber 
\end{eqnarray}
Since $\Gamma_{p}$ is zero-mean Gaussian distributed, and ${g}^{(2)}_{p} = z_p \lambda$ for some $z_p \in \mathbb{Z}$, it can be readily shown that the ML estimate of ${g}^{(2)}_{p}$ is: 
\begin{eqnarray} \label{eqn_gamma_hat}
\widehat{g}^{(2)}_{p} = \lambda \bigg( \bigg\lfloor \frac{ \widetilde{\Gamma}_{p} - \widehat{g}^{(1)}_{p} }{\lambda} + \frac{1}{2} \bigg\rfloor - \frac{1}{2} \bigg). 
\end{eqnarray}
and therefore the ML estimate of the CSI power is: 
\begin{eqnarray} \label{eqn_hat_Gama}
\widehat{\Gamma}_{p} = \widetilde{\Gamma}_{p} - \widehat{g}^{(1)}_{p} - \widehat{g}^{(2)}_{p} .
\end{eqnarray}
%
\begin{figure} 
\includegraphics[width= 0.5\textwidth]{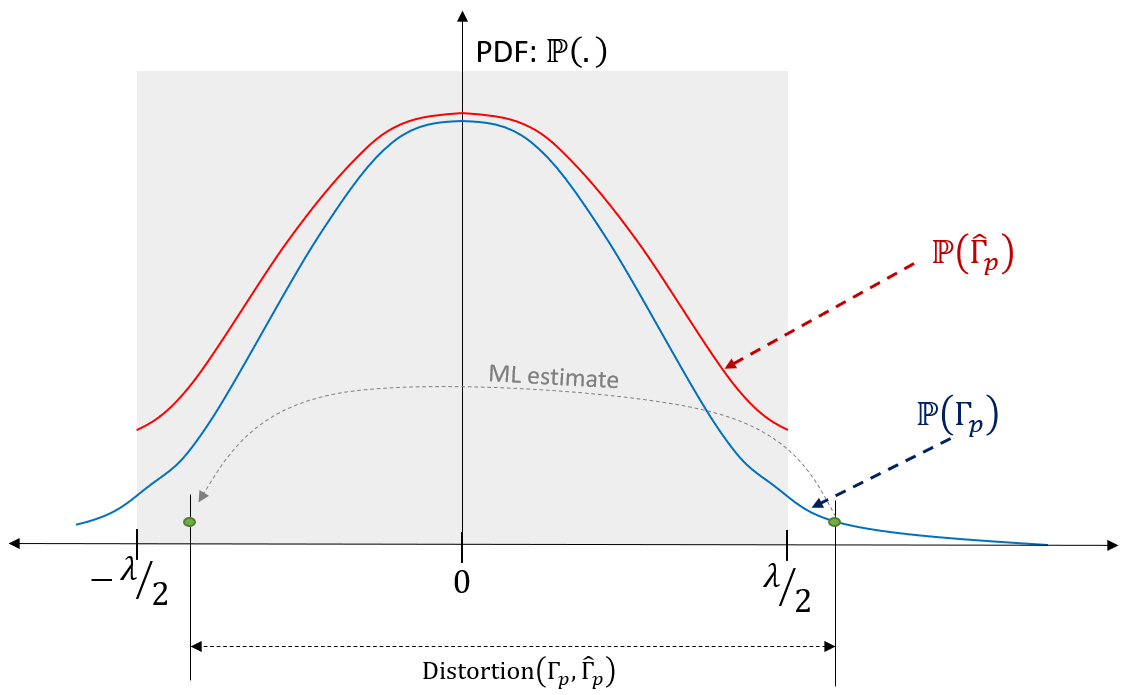}
\caption{An illustration of ${\Gamma}_{p}$, its ML estimate $\widehat{\Gamma}_{p}$, and the distortion introduced by the ML estimation.}
\label{Fig_gamma_distortion}
\end{figure}
The distribution of $\widehat{\Gamma}_{p}$ and its relation to ${\Gamma}_{p}$ is illustrated in Fig.~\ref{Fig_gamma_distortion}. We have the following observations on this ML estimate $\widehat{\Gamma}_{p}$:
\begin{lemma} \label{remark3}
If estimate $\widehat{g}^{(1)}_p$ is accurate and $P \gg 1$, the then ${\Gamma}_{p}$ has a wrapped Gaussian distribution and its variance can be estimated from $\widetilde{\Gamma}_{p}$ as:
\begin{eqnarray} \label{eqn_sigma_est}
\widehat{\sigma}_{\Gamma}^2 = - \frac{\lambda^2}{2 \pi^2} \log \bigg| \frac{1}{P} \sum_{p \in \mathcal{P}}  \exp \Big\{ {\rm j} 2 \pi \widehat{\Gamma}_{p} / \lambda \Big\} \bigg|
\end{eqnarray}
\end{lemma}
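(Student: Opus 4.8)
The plan is to establish Lemma~\ref{remark3} by combining the Gaussian description of $\Gamma_p$ from Lemma~\ref{remark1} with the structure of the modulo operation implicit in the ML estimator \eqref{eqn_gamma_hat}--\eqref{eqn_hat_Gama}. First I would observe that, assuming $\widehat{g}^{(1)}_p \approx g^{(1)}_p$, equation \eqref{eqn_hat_Gama} yields $\widehat{\Gamma}_p = \Gamma_p + g^{(2)}_p - \widehat{g}^{(2)}_p$, and by \eqref{eqn_gamma_hat} the residual $g^{(2)}_p - \widehat{g}^{(2)}_p$ is exactly the integer multiple of $\lambda$ that pulls $\Gamma_p$ into the interval $[-\lambda/2, \lambda/2)$; in other words $\widehat{\Gamma}_p = \mathrm{mod}\{\Gamma_p + \lambda/2, \lambda\} - \lambda/2$. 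Since $\Gamma_p \sim \mathcal{N}(0, \sigma_\Gamma^2)$ i.i.d.\ by Lemma~\ref{remark1}, $\widehat{\Gamma}_p$ is by definition a wrapped Gaussian (wrapped normal) random variable with wrapping period $\lambda$, zero location parameter, and scale $\sigma_\Gamma$. That settles the first claim of the lemma.

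For the variance-estimation formula \eqref{eqn_sigma_est}, I would invoke the standard characteristic-function / circular-moment identity for the wrapped normal distribution: for $X$ wrapped normal with period $\lambda$ and pre-wrap variance $\sigma_\Gamma^2$, the first circular moment satisfies $\mathbb{E}\big[ e^{\mathrm{j} 2\pi X/\lambda} \big] = e^{-2\pi^2 \sigma_\Gamma^2 / \lambda^2}$, which is real and positive. This follows because wrapping does not change $e^{\mathrm{j} 2\pi X/\lambda}$, so the circular moment equals the ordinary characteristic function of the unwrapped Gaussian evaluated at $2\pi/\lambda$, namely $\exp\{\mathrm{j}(2\pi/\lambda)\cdot 0 - \tfrac12 (2\pi/\lambda)^2 \sigma_\Gamma^2\}$. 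Taking magnitudes, logarithms, and rearranging gives $\sigma_\Gamma^2 = -\frac{\lambda^2}{2\pi^2}\log\big| \mathbb{E}[ e^{\mathrm{j} 2\pi \widehat{\Gamma}_p/\lambda}] \big|$. Replacing the expectation by its empirical average over the $P$ frames — justified because the $\widehat{\Gamma}_p$ are i.i.d.\ and $P \gg 1$, so the law of large numbers makes $\frac1P\sum_p e^{\mathrm{j}2\pi\widehat{\Gamma}_p/\lambda}$ converge to the true circular moment — yields exactly \eqref{eqn_sigma_est}.

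I expect the main subtlety to be not the algebra but the bookkeeping of the $\tfrac12$ offsets and making the modulo identity between \eqref{eqn_gamma_hat} and the wrapped-Gaussian description airtight: one must check that $\widehat{g}^{(2)}_p$ as defined indeed always lands $\widehat{\Gamma}_p$ in a half-open interval of length $\lambda$ centered at $0$ (the $-\tfrac12$ inside and the $+\tfrac12$ in the floor are there precisely to produce a symmetric interval and to respect the $\mathcal{G} = \{z\lambda\}$ grid whose representative AGC gains are half-integer multiples of $\lambda$ — consistent with $\widehat{g}^{(2)}_{-1}$ conventions elsewhere). A secondary point worth a sentence is that the approximation error from using $\widehat{g}^{(1)}_p$ in place of $g^{(1)}_p$ and the finite-$P$ sampling error in \eqref{eqn_sigma_est} are both controlled, which is exactly why the hypotheses ``$\widehat{g}^{(1)}_p$ accurate'' and ``$P \gg 1$'' appear in the statement. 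With these checks in place the proof is short; I would relegate the detailed verification of the modulo identity and the wrapped-normal moment formula to the same appendix (Appendix~\ref{appdix1p5}) that handles Lemmas~\ref{remark1} and \ref{remark2}.
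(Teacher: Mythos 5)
Your proposal is correct and follows essentially the same route as the paper's own proof: reduce $\widehat{\Gamma}_p$ to a wrap of $\Gamma_p$ modulo $\lambda$, identify it as a wrapped Gaussian, apply the first circular-moment identity $\mathbb{E}\big[e^{\mathrm{j}2\pi\widehat{\Gamma}_p/\lambda}\big]=e^{-2\pi^2\sigma_\Gamma^2/\lambda^2}$, and replace the expectation by the sample mean for $P\gg 1$. Your writeup is in fact more precise than the paper's (which abbreviates the moment identity, with a typo, as $\mathbb{E}\{\widehat{\Gamma}_p\}=e^{-\sigma_\Gamma^2/2}$); the only immaterial discrepancy is that the $\tfrac{1}{2}$ offsets in \eqref{eqn_gamma_hat} actually place $\widehat{\Gamma}_p$ in $[0,\lambda)$ rather than $[-\lambda/2,\lambda/2)$, a constant shift that drops out once the magnitude is taken in \eqref{eqn_sigma_est}.
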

\begin{proof}
See Appendix \ref{appdix1p5}
\end{proof}
\begin{lemma} \label{remark4}
If estimate $\widehat{g}^{(1)}_p$ is accurate, the distortion between ${\Gamma}_{p}$ and $\widehat{\Gamma}_{p}$ is given by ${\lambda}^2 \mathscr{D} \big( \lambda \big/ \sigma_{\Gamma} \big)$ where:
\begin{eqnarray} \label{eqn_distortion}
\mathscr{D}( x ) = \sum_{z \in \mathbb{Z}} \bigg[ Q \bigg( \Big(z-\frac{1}{2}\Big) x \bigg) - Q \bigg( \Big(z+\frac{1}{2} \Big) x \bigg) \bigg] {z}^2
\end{eqnarray}
where $Q(\cdot)$ is the Q-function for standard Gaussian.
\end{lemma}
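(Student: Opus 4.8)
The plan is to compute the mean-squared distortion $\mathbb{E}\big[ ({\Gamma}_{p} - \widehat{\Gamma}_{p})^2 \big]$ directly, exploiting the structure of the ML estimator \eqref{eqn_gamma_hat}--\eqref{eqn_hat_Gama}. First I would observe that under the assumption $\widehat{g}^{(1)}_{p} \approx g^{(1)}_{p}$, combining \eqref{eqn_tilde_gamma_approx}, \eqref{eqn_gamma_hat} and \eqref{eqn_hat_Gama} gives that the estimation error depends only on ${\Gamma}_{p}$ and $\lambda$: writing $\widetilde{\Gamma}_{p} - \widehat{g}^{(1)}_{p} = {\Gamma}_{p} + z_p \lambda$ for the (unknown) true integer $z_p$, the rounding in \eqref{eqn_gamma_hat} recovers $z_p$ correctly unless $|{\Gamma}_{p}|$ exceeds a half-step, so the reconstructed error is ${\Gamma}_{p} - \widehat{\Gamma}_{p} = \big( z_p - \widehat{z}_p \big)\lambda$, where $\widehat{z}_p = \big\lfloor {\Gamma}_{p}/\lambda + 1/2 \big\rfloor$ is the integer by which the ML estimate ``misses.'' Crucially $z_p$ itself cancels, so the distortion is a deterministic functional of the distribution of ${\Gamma}_{p}$ alone.

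Next I would use Lemma \ref{remark1}: ${\Gamma}_{p} \sim \mathcal{N}(0, \sigma_{\Gamma}^2)$. Therefore the error equals $\widehat{z}_p \lambda$ where $\widehat{z}_p$ is the nearest integer to ${\Gamma}_{p}/\lambda$, i.e. $\widehat{z}_p = z$ exactly when ${\Gamma}_{p}/\lambda \in \big[ z - \tfrac{1}{2}, z + \tfrac{1}{2} \big)$. The probability of this event is
\begin{eqnarray}
\mathbb{P}\big( \widehat{z}_p = z \big) = \mathbb{P}\Big( \big(z - \tfrac{1}{2}\big) \lambda \leq {\Gamma}_{p} < \big(z + \tfrac{1}{2}\big) \lambda \Big) = Q\Big( \big(z - \tfrac{1}{2}\big) \tfrac{\lambda}{\sigma_{\Gamma}} \Big) - Q\Big( \big(z + \tfrac{1}{2}\big) \tfrac{\lambda}{\sigma_{\Gamma}} \Big),
\end{eqnarray}
using $\mathbb{P}({\Gamma}_{p} \geq t) = Q(t/\sigma_{\Gamma})$. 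The mean-squared distortion is then $\mathbb{E}\big[ (\widehat{z}_p \lambda)^2 \big] = \lambda^2 \sum_{z \in \mathbb{Z}} z^2 \, \mathbb{P}(\widehat{z}_p = z)$, which upon substituting the above probabilities is exactly ${\lambda}^2 \mathscr{D}(\lambda/\sigma_{\Gamma})$ with $\mathscr{D}$ as defined in \eqref{eqn_distortion}. I would also note the $z=0$ term contributes nothing (as it should, since no error occurs there), and the sum converges because the Gaussian tails make $Q\big((z \mp \tfrac12)\lambda/\sigma_\Gamma\big)$ decay faster than any polynomial in $z$.

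The only subtlety — and the main thing to handle carefully rather than a real obstacle — is the boundary between ``distortion'' as an expected value versus a per-sample quantity, and the fact that $\widehat{g}^{(1)}_{p}$ is only approximately $g^{(1)}_{p}$; I would state the result as an expectation conditioned on the accuracy of $\widehat{g}^{(1)}_{p}$, consistent with the lemma's hypothesis, and remark that the half-integer offset $-\tfrac12$ appearing in \eqref{eqn_gamma_hat} shifts both $\widehat{\Gamma}_p$ and $\widehat{g}^{(2)}_p$ by the same constant $\lambda/2$ so it drops out of the difference ${\Gamma}_{p} - \widehat{\Gamma}_{p}$ and does not affect the rounding cells. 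A full derivation would just be the two displayed lines above plus the interchange of expectation and the (absolutely convergent) sum over $z$, so I would keep it to Appendix \ref{appdix1p5} alongside the other gain lemmas.
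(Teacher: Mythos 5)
Your proof is correct and takes essentially the same route as the paper's: the paper writes the mean-squared distortion as a Gaussian integral split over the cells $\big[(z-\tfrac12)\lambda,(z+\tfrac12)\lambda\big)$, on each of which the squared error is $(z\lambda)^2$, which is exactly your computation of $\lambda^2\sum_{z}z^2\,\mathbb{P}(\widehat{z}_p=z)$ via the Q-function. One small caveat: the $-\tfrac12$ in \eqref{eqn_gamma_hat} shifts $\widehat{\Gamma}_p$, and hence the difference ${\Gamma}_p-\widehat{\Gamma}_p$, by a constant $\lambda/2$ rather than cancelling as you claim, but the paper's own appendix silently drops the same constant, so this does not separate your argument from theirs.
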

\begin{proof}
See Appendix \ref{appdix1p5}.
\end{proof}

\subsubsection{Estimation of $\lambda$}
The last step of the algorithm is to select the best hypothesis of $\lambda$ if its value is unknown apriori. For this, we can perform a line-search over a feasible set, where the goal is to find a candidate $\lambda$ that (i) minimizes the model fitting error, captured by $\widehat{\sigma}_{\Gamma}^2$ in Lemma \ref{remark3} and (ii) keeps the distortion caused by the model low, captured by Lemma \ref{remark4}. Correspondingly we can select $\lambda$ that minimizes:
$$ \widehat{\sigma}_{\Gamma}^2 + {\lambda}^2 \mathscr{D}\big( \lambda \big/ \widehat{\sigma}_{\Gamma} \big) . $$
\begin{remark}
Intuitively, $\widehat{\sigma}_{\Gamma}^2$ captures the error in `model fitting' which will be smaller for a smaller $\lambda$ (due to more AGC gain levels). To counter act that, the distortion terms acts like a regularizer that incentivizes larger choices of $\lambda$. 
\end{remark}
The overall set of steps in the proposed algorithm are summarized in Algorithm \ref{Algo4}.
\begin{algorithm}
\label{Algo4}
\caption{Gain estimation with uniform $\mathcal{G}$}
\begin{algorithmic} 
\STATE Given: $\widetilde{\Gamma}_{p}$ for $p \in \mathcal{P}$
\STATE $\lambda_{\rm max} = 1.5 \big( \max_p\{\widetilde{\Gamma}_{p}\} - \min_p\{\widetilde{\Gamma}_{p}\} \big)$
\FOR{$\lambda = (0.05:0.05:1)\times \lambda_{\rm max}$}
\STATE // Can be computed in parallel to improve speed.
\STATE Compute $\widetilde{\Xi}_{p}$ from \eqref{eqn_defn_Xi_tilde} for $p \in \mathcal{P}$.
\STATE Run Algorithm \ref{Algo3} to obtain $\widehat{g}^{(1)}_{p}$.
\STATE Compute $\widehat{g}^{(2)}_p, \widehat{\Gamma}_{p}$ from \eqref{eqn_gamma_hat} and \eqref{eqn_hat_Gama} for $p \in \mathcal{P}$.
\IF{$\sum_p {(\widehat{\Gamma}_{p})}^2 \big/ P \leq \lambda^2/24$}
\STATE Compute $\widehat{\sigma}_{\Gamma}$ from \eqref{eqn_sigma_est}.
\STATE Compute $\mathscr{D} \big( \lambda \big/ \widehat{\sigma}_{\Gamma} \big)$ from \eqref{eqn_distortion}.
\STATE // $\mathscr{D}(\cdot)$ can be pre-computed and stored in a look up table to improve speed.
\STATE Compute $\mathrm{Obj} (\lambda) = \widehat{\sigma}_{\Gamma}^2 + {\lambda}^2 \mathscr{D} \big( \lambda \big/ \widehat{\sigma}_{\Gamma} \big)$.
\ELSE
\STATE $\mathrm{Obj} (\lambda_{m}) = \infty$.
\STATE // Skip $\lambda$ where variance is too close to uniform distribution. Eqn. \eqref{eqn_sigma_est} is inaccurate in such cases.
\ENDIF
\ENDFOR
\STATE For $\lambda$ with minimum value of $\mathrm{Obj}(\lambda)$, return $\big\{\widehat{g}_{p} = {10}^{(\widehat{g}^{(1)}_p+\widehat{g}^{(2)}_p)/20} \big| p \in \mathcal{P} \big\}$.
\end{algorithmic}
\end{algorithm}

\section{Estimation of timing error and CPE} \label{sec_CPE_timing_correct}
In this section, the gain corrected CSI $\bar{{h}}_{p,k}$ shall be used for estimating the values of ${\tau}_p, {\psi}_p$. For the analysis in this section, we assume that these gain estimates are accurate, and thus from \eqref{eqn_H_k_est} we have:
\begin{eqnarray} \label{eqn_Hk_gain_synchronized}
\bar{{h}}_{p,k} = {{h}}_{p,k} e^{-{\rm j} 2 \pi f_k {\tau}_p} e^{-{\rm j} {\psi}_p}.
\end{eqnarray}

\subsection{Distribution of timing error and CPE} \label{subsec_timing_CFO_dist}
The timing error ${\tau}_{p}$ can arise from two sources: (i) the sampling time granularity of the ADC at the RX and (ii) the error in the symbol start time estimated using the L-STF of the CSI frame, and it is fundamentally limited by the inverse of the OFDM system bandwidth, viz., $T_{\rm s}/K$. Therefore it is safe to model the timing error as $\tau_{p} \sim \textrm{Uni}(- \kappa T_{\rm s}/K, \kappa T_{\rm s}/K)$, where $\kappa$ is a system parameter that depends on the accuracy of the RX timing compensation (typically $\kappa < 20$). Furthermore, since synchronization is performed independently for each CSI frame, the timing error ${\tau}_{p}$ is also independently distributed for each $p \in \mathcal{P}$.

The CPE ${\psi}_p$ is the difference in the carrier phase between the TX and RX at time $p T_{\rm rep}$. This suggests that ${\psi}_p \approx {\rm mod} \left\{ {\psi}_{1} + 2 \pi f_{\rm CFO} (p-1)T_{\rm rep}, 2\pi \right\} $, where $f_{\rm CFO}$ is the CFO. However, the value of $f_{\rm CFO}$ is typically unknown, it has a short coherence time, and is hard to estimate with the necessary precision to exploit this relation. In addition, cycle slips in phase-locked loop circuits present in the TX and RX further cause phase decoherence and break down this relationship. 
Correspondingly, ${\psi}_p$ is modeled to be independent and uniformly distributed for each $p \in \mathcal{P}$, i.e., $\psi_p \sim \mathrm{Uni}[-\pi, \pi]$. Note that this also explains the experimental observations made in \cite{Liu2014} about the (un-corrected) channel phase, and the behavior of $\angle \widetilde{h}_{p,k}$ in Fig.~\ref{Fig_impaired_CSI_illustrate}. Further experimental validation of these models is performed later in Section \ref{subsec_CSI_gain_phase_err_dist}. 

\subsection{Baseline methods}
As a baseline, we consider two methods from prior art. First, is the heuristic method for the correction of the timing error and CPE $\widehat{\tau_p}, \widehat{\psi}_p$ that was proposed in \cite{Kotaru2015, Dou2021}:
\begin{align}
\widehat{\tau}_p, \widehat{\psi}_{p} = \argmin_{\tau, \psi} \bigg\{ \sum_{k \in \mathcal{K}} {\Big( 2\pi f_{k} \tau + \psi - \mathscr{U}\{\angle \bar{{h}}_{p,k} \} \Big)}^2 \bigg\} ,  \label{eqn_CPE_timing_baseline1} 
\end{align}
where $\mathscr{U}(\cdot)$ is the phase unwrapping function that for each $k \in \mathcal{K}$ adds integer shifts of $2 \pi$ to an argument to ensure that $|\angle \bar{{h}}_{p,k} - \angle \bar{{h}}_{p,k-1}| \leq \pi$. Note that this method does not exploit the temporal correlation of the channel. In fact, it can be shown to be an approximation of one of our proposed solutions below. 
A second heuristic method to estimate ${\tau}_p$ was proposed in the IEEE 802.11az standard \cite{IEEE_11az}, that exploits the frequency domain, coherence of the channel as:
\begin{subequations} \label{eqn_CPE_timing_baseline2} 
\begin{eqnarray}
\widehat{\tau}_p = \frac{T_{\rm s}}{2 \pi} \angle \bigg[ \sum_{k=0}^{K-2} \bar{{h}}_{p,k} \bar{{h}}_{p,k+1}^{*} \bigg], \label{eqn_CPE_baseline2_tau} 
\end{eqnarray}
where $T_{\rm s}$ is the OFDM symbol duration. This method can be extended to obtain the phase estimate as:
\begin{eqnarray}
\widehat{\psi}_{p} = - \angle \bigg[ \sum_{k \in \mathcal{K}} \bar{h}_{p,k} e^{{\rm j} 2 \pi f_k \widehat{\tau}_p}\bigg] . \label{eqn_CPE_baseline2_psi}
\end{eqnarray}
\end{subequations}
It can be shown that \eqref{eqn_CPE_baseline2_tau} is reasonably accurate when $\gamma \approx 1$ and \eqref{eqn_CPE_baseline2_psi} is accurate when additionally the channel has a strong LoS path. However, this method also does not exploit the temporal correlation of the channel and hence is sub-optimal.

\subsection{Method 1: Estimation in strongly LoS channels} \label{subsec_timing_CFO_est_0}
Here we consider estimation of $\tau_p, \psi_p$ under the assumption that the channel has a strong LoS component, where $\gamma \approx 1$ and the static channel component is dominated by a strong frequency-flat path. 
Under these conditions, using \emph{coarse} estimates $\bar{\tau}_p, \bar{\psi}_p$ from \eqref{eqn_CPE_timing_baseline2}, an estimate of the static channel component can be obtained as:
\begin{eqnarray} \label{eqn_b_hat_est_LoS}
\bar{b}_k &=& \sum_{p\in \mathcal{P}} \bar{h}_{p,k} e^{{\rm j}(2 \pi f_k \bar{\tau}_p + \bar{\psi}_p)} \big/ P 
\end{eqnarray}
Note that $\bar{b}_k$ in \eqref{eqn_b_hat_est_LoS} is averaged over the $P$ CSI frames and hence is expected to be less noisy than estimates $\bar{\tau}_p, \bar{\psi}_p$. 
Assuming the estimate $\bar{b}_k$ is accurate for all $k \in \mathcal{K}$, from \eqref{eqn_Hk_gain_synchronized} the ML estimate of $\{{\tau}_p, {\psi}_p | p \in \mathcal{P} \}$ can further be obtained as:
\begin{eqnarray} \label{eqn_timing_err_est_method1}
\widehat{\tau}_p, \widehat{\psi}_p = \argmin_{|\tau | \leq \frac{\kappa T_{\rm s}}{K}, |\psi| < \pi} \bigg\{ \sum_{k \in \mathcal{K}} {\Big|\bar{{h}}_{p,k} e^{{\rm j} (2 \pi f_k \tau + \psi)} - \bar{b}_k \Big|}^2 \bigg\} \nonumber \\
\equiv \argmax_{|\tau | \leq \frac{\kappa T_{\rm s}}{K}, |\psi| < \pi} \bigg\{ \sum_{k \in \mathcal{K}} \mathrm{Re} \Big\{ e^{-{\rm j} (2 \pi f_k \tau + \psi)} {\bar{{h}}_{p,k}}^{*} \bar{{b}}_k \Big\} \bigg\}.
\end{eqnarray}
To further simplify the complexity of the line-search in \eqref{eqn_timing_err_est_method1}, we also have the following result:
\begin{lemma} \label{Th_method1_estimation_simpl}
In the strongly LoS scenario where ${b}_{k} = \bar{b}_k$ and $\gamma \approx 1$, the solutions to \eqref{eqn_timing_err_est_method1} are also the solutions of the weighted least-squares problem:
\begin{flalign}
& \widehat{\tau}_p, \widehat{\psi}_p = \argmin_{\tau, \psi} \bigg\{ \sum_{k \in \bar{\mathcal{K}}} \big|\omega_{p,k} \big| & \nonumber \\
& \qquad \qquad \qquad \Big[ 2\pi f_{k} (\tau-\bar{\tau}_p) + \psi - \mathscr{U} \big( \angle \omega_{p,k} \big) \Big]^2 \bigg\}, \!\!\!\!\! & \label{eqn_CPE_timing_simpl_method1} 
\end{flalign}
where $\omega_{p,k} = \bar{{{h}}}_{p,k}^{*} \bar{b}_{k} e^{-{\rm j} 2 \pi f_k \bar{\tau}_p} $, $\bar{\mathcal{K}} \triangleq \{k \in \mathcal{K} \big| |\bar{b}_k| > 0\}$, $\bar{\tau}_p$ is the estimate from \eqref{eqn_CPE_baseline2_tau}, $\widetilde{\mathscr{U}}(\cdot)$ is a `robust' phase unwrapping function computed as: 
\begin{flalign}
& \widetilde{\mathscr{U}} \big(\angle \omega_{p,k} \big) = \mathrm{mod}\bigg\{ \angle \omega_{p,k} - \mathscr{U} \bigg( \angle \bigg[\sum_{\ell \in \mathcal{L}(k)} \omega_{p,\ell} \bigg] \bigg) + \pi, 2 \pi \bigg\} & \nonumber \\
& \qquad \qquad \qquad  - \pi + \mathscr{U} \bigg( \angle \bigg[ \sum_{\ell \in \mathcal{L}(k)} \omega_{p,\ell} \bigg] \bigg), &
\end{flalign}
and $\mathcal{L}(k)$ is a set that includes $k$, its $3$ preceding, and $3$ succeeding indices in sorted set $\bar{\mathcal{K}}$. 
\end{lemma}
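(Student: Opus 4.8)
The plan is to convert the maximum-likelihood objective of \eqref{eqn_timing_err_est_method1} into the weighted quadratic \eqref{eqn_CPE_timing_simpl_method1} via a small-angle argument that is exact in the idealized strongly-LoS, $\gamma\to 1$ regime. First I would rewrite the objective: since $b_k=\bar b_k$, every term of \eqref{eqn_timing_err_est_method1} with $k\notin\bar{\mathcal K}$ vanishes, and for $k\in\bar{\mathcal K}$ one has $\bar h_{p,k}^{*}\bar b_k=\omega_{p,k}\,e^{{\rm j}2\pi f_k\bar\tau_p}$, whence
\begin{eqnarray}
\sum_{k\in\mathcal K}\mathrm{Re}\Big\{e^{-{\rm j}(2\pi f_k\tau+\psi)}\bar h_{p,k}^{*}\bar b_k\Big\} &=& \sum_{k\in\bar{\mathcal K}}|\omega_{p,k}|\cos\Big(2\pi f_k(\tau-\bar\tau_p)+\psi-\angle\omega_{p,k}\Big). \nonumber
\end{eqnarray}
So \eqref{eqn_timing_err_est_method1} asks for the $(\tau,\psi)$ that best co-phases the phasors $\omega_{p,k}$, $k\in\bar{\mathcal K}$.

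Next I would pin down $\omega_{p,k}$. From \eqref{eqn_Hk_gain_synchronized} and \eqref{eqn_channel_resp}, $\bar h_{p,k}=(b_k+d_{p,k})e^{-{\rm j}(2\pi f_k\tau_p+\psi_p)}$. In the idealized limit $\gamma=1$ (so $d_{p,k}\equiv 0$), the coarse estimate \eqref{eqn_CPE_baseline2_tau} returns $\bar\tau_p=\tau_p$ exactly --- using $f_{k+1}-f_k=1/T_{\rm s}$, the normalization $\angle[\sum_k b_k b_{k+1}^{*}]=0$, and that the timing error is small enough that no $2\pi$ ambiguity arises in the $\angle[\cdot]$ --- so that $\omega_{p,k}=b_k^{*}b_k\,e^{{\rm j}\psi_p}=|b_k|^2 e^{{\rm j}\psi_p}$, i.e.\ $\angle\omega_{p,k}=\psi_p$ for every $k\in\bar{\mathcal K}$. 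For $\gamma\approx 1$ this survives as a perturbation, $\omega_{p,k}=|b_k|^2 e^{{\rm j}\psi_p}\big(1+O(|d_{p,k}|/|b_k|)\big)$ with $\bar\tau_p\approx\tau_p$, so the raw arguments $\angle\omega_{p,k}$ cluster tightly around $\psi_p$. Because the static channel is dominated by a frequency-flat path, $|b_k|$ stays bounded away from $0$ on $\bar{\mathcal K}$ and $\angle b_k$ changes by $\ll\pi$ between neighbouring indices; hence the local average $\sum_{\ell\in\mathcal L(k)}\omega_{p,\ell}$ keeps the same clustered phase, the anchor in $\widetilde{\mathscr{U}}(\cdot)$ never jumps a $2\pi$ branch, and $\widetilde{\mathscr{U}}(\angle\omega_{p,k})$ is the correct continuous representative of $2\pi f_k(\tau_p-\bar\tau_p)+\psi_p$ up to the small $d$-induced noise.

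Then I would linearize and transfer optimality. On the region where each residual $r_k(\tau,\psi)\triangleq 2\pi f_k(\tau-\bar\tau_p)+\psi-\widetilde{\mathscr{U}}(\angle\omega_{p,k})$ has magnitude below $\pi$, $\cos x=1-\tfrac12 x^2+O(x^4)$ gives
\begin{eqnarray}
\sum_{k\in\bar{\mathcal K}}|\omega_{p,k}|\cos\Big(2\pi f_k(\tau-\bar\tau_p)+\psi-\angle\omega_{p,k}\Big) &=& \sum_{k\in\bar{\mathcal K}}|\omega_{p,k}|-\frac12\sum_{k\in\bar{\mathcal K}}|\omega_{p,k}|\,r_k(\tau,\psi)^2+\cdots, \nonumber
\end{eqnarray}
and since $\sum_k|\omega_{p,k}|$ does not depend on $(\tau,\psi)$, the leading-order maximizer of the left-hand side is the minimizer of $\sum_k|\omega_{p,k}|\,r_k(\tau,\psi)^2$, which is exactly \eqref{eqn_CPE_timing_simpl_method1}. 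To upgrade ``same leading-order objective'' to ``same solution'', note \eqref{eqn_CPE_timing_simpl_method1} is convex with a unique minimizer which --- the targets $\widetilde{\mathscr{U}}(\angle\omega_{p,k})$ being small and clustered --- sits near $(\tau_p,\psi_p)$ and hence inside the box $\{|\tau|\leq\kappa T_{\rm s}/K,\ |\psi|<\pi\}$, so it is feasible for \eqref{eqn_timing_err_est_method1}; conversely, any competitor making $2\pi f_k\tau$ wrap across the sub-carrier span drives some $\cos(\cdot)$ below its peak while all weights $|\omega_{p,k}|$ remain positive, so it cannot be the global maximizer. Hence the constrained maximizer of \eqref{eqn_timing_err_est_method1} equals the minimizer of \eqref{eqn_CPE_timing_simpl_method1}.

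The main obstacle I anticipate is the second step: making ``strongly LoS'' quantitative enough that (i) $\bar\tau_p$ from \eqref{eqn_CPE_baseline2_tau} is provably close to $\tau_p$, (ii) the $d_{p,k}$-perturbation of $\omega_{p,k}$ is uniformly small over $\bar{\mathcal K}$, and (iii) $\widetilde{\mathscr{U}}$ provably selects the right $2\pi$ branch; once these are controlled, the linearization and the optimality transfer above are routine. It is cleanest to prove the equivalence first in the fully idealized case ($\gamma=1$ and $b_k$ frequency-flat), where $\angle\omega_{p,k}$ is $k$-independent, $\widetilde{\mathscr{U}}$ acts as the identity, and both problems return $(\tau_p,\psi_p)$ exactly, and then recover the stated ``$\gamma\approx 1$'' version by a perturbation argument.
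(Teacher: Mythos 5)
Your proposal is correct and follows essentially the same route as the paper's proof: rewrite the ML objective as $\sum_{k\in\bar{\mathcal K}}|\omega_{p,k}|\cos(2\pi f_k(\tau-\bar\tau_p)+\psi-\angle\omega_{p,k})$, use the strong-LoS/$\gamma\approx 1$ assumption to show $\bar\tau_p\approx\tau_p$ and that the phases $\angle\omega_{p,k}$ vary slowly so the robust unwrapping selects a single common $2\pi$ branch (absorbed into $\psi$), and then apply the second-order expansion $\cos x\approx 1-x^2/2$ to obtain the weighted least-squares form. Your extra care about feasibility of the unconstrained WLS minimizer within the box $|\tau|\leq\kappa T_{\rm s}/K$, $|\psi|<\pi$ is a minor refinement the paper glosses over, not a different argument.
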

\begin{proof}
See Appendix \ref{appdix0}.
\end{proof}
Here $\widetilde{\mathscr{U}}(\cdot)$ is a \emph{robust} alternative to $\mathscr{U}(\cdot)$, that reduces impact of noise before unwrapping \cite{Xu2013}. Additionally, the pre-computation of $\bar{\tau}_p$ and its use in $\omega_{p,k}$, although not essential, is beneficial to prevent errors in phase unwrapping when there is a long discontinuity in $\bar{\mathcal{K}}$ owing to deep fades. 
Note that the solution to the weighted least-squares problem can be found in closed-form, unlike \eqref{eqn_timing_err_est_method1}, thus reducing complexity. Using \eqref{eqn_timing_err_est_method1} or \eqref{eqn_CPE_timing_simpl_method1}, we propose a CPE and timing offset estimation approach for each frame $p \in \mathcal{P}$, as shown in Algorithm \ref{Algo0}. 
\begin{remark}
Note that the baseline method \eqref{eqn_CPE_timing_baseline1} is an approximation of \eqref{eqn_CPE_timing_simpl_method1} obtained by setting $|{b}_k|= 1$ and $\bar{\tau}_p = 0$ and using conventional phase-unwrapping.
\end{remark}
\begin{algorithm}
\label{Algo0}
\caption{Strong LoS estimation of $\widehat{\tau}_p, \widehat{\psi}_p$}
\begin{algorithmic} 
\STATE Given: $\bar{{{h}}}_{p,k}$ for $p \in \mathcal{P}, k \in \mathcal{K}$
\FOR{$p \in \mathcal{P}$}
\STATE Compute $\bar{\tau}_p, \bar{\psi}_p$ from \eqref{eqn_CPE_timing_baseline2}.
\STATE{// Can be computed in parallel to improve speed.}
\ENDFOR
\FOR{$k \in \mathcal{K}$}
\STATE Compute $\bar{b}_k$ from \eqref{eqn_b_hat_est_LoS}.
\STATE{// Can be computed in parallel to improve speed.}
\ENDFOR
\FOR{$p \in \mathcal{P}$}
\STATE Calculate and save $\widehat{\tau}_p, \widehat{\psi}_p$ from \eqref{eqn_timing_err_est_method1} or \eqref{eqn_CPE_timing_simpl_method1}.
\STATE{// Can be computed in parallel to improve speed.}
\STATE{// For \eqref{eqn_CPE_timing_simpl_method1} we use $\bar{\mathcal{K}} = \{k \in \mathcal{K} {\big| |\bar{b}_k|}^2 > 0.1 \}$}
\ENDFOR 

\end{algorithmic}
\end{algorithm}

\subsection{Method 2: Estimation in strongly static channels} \label{subsec_timing_CFO_est}
In this method, we do not assume the channel response to be frequency flat, although we still assume $\gamma \approx 1$. Unfortunately, the joint ML estimation of $\{{\tau}_p, {\psi}_p | p \in \mathcal{P} \}$ is too computationally cumbersome. Therefore, we shall consider a greedy, causal estimation approach where ${\tau}_p, {\psi}_p$ for each $p \in \mathcal{P}$ are estimated using prior estimates of ${\psi}_q, \tau_q$ for all the past CSI frames $q \in \mathcal{Q}_p$ where $\mathcal{Q}_p \triangleq \{q \in \mathcal{P} | q < p\}$. For convenience, let us define $\widehat{\tau}_q, \widehat{\psi}_q$ as the estimates of ${\tau}_q, {\psi}_q$, respectively, for each $q \in \mathcal{Q}_p$. We then have the following result:
\begin{lemma} \label{Th_ML_estimation}
For any $p \in \mathcal{P}$, if the estimates $\widehat{\tau}_q, \widehat{\psi}_q$ are error-free for $q \in \mathcal{Q}_p$, i.e., if $\widehat{\tau}_q = {\tau}_q, \widehat{\psi}_q = {\psi}_q$, then the conditional ML estimates of $\tau_p, {\psi}_p$ are the solutions of:
\begin{subequations} \label{eqn_synch_error_est}
\begin{eqnarray}
\widehat{\tau_p} &=& \argmax_{|\tau| \leq \frac{\kappa T_{\rm s}}{K}} \bigg| \sum_{k \in \mathcal{K}} \sum_{q \in \mathcal{Q}_p} e^{-{\rm j} 2 \pi f_k \tau} {\bar{{{h}}}_{p,k}}^{*} \widehat{{{h}}}_{q,k} \bigg| \label{eqn_timing_err_est} \\
\widehat{\psi}_p &=& {\rm angle} \bigg\{ \sum_{k \in \mathcal{K}} \sum_{q \in \mathcal{Q}_p} e^{-{\rm j} 2 \pi f_k \widehat{\tau_p}} {\bar{{{h}}}_{p,k}}^{*} \widehat{{{h}}}_{q,k} \bigg\}, \label{eqn_CPE_est}
\end{eqnarray}
where $\widehat{{{h}}}_{q,k}$ is as defined in \eqref{eqn_Hk_synchronized}.
\end{subequations}
\end{lemma}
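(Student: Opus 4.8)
The goal is to derive the conditional ML estimates of $\tau_p, \psi_p$ given that the past estimates $\widehat{\tau}_q = \tau_q, \widehat{\psi}_q = \psi_q$ for $q \in \mathcal{Q}_p$. The plan is to write down the likelihood of the observations $\{\bar{h}_{q,k} \mid q \in \mathcal{Q}_p \cup \{p\}, k \in \mathcal{K}\}$ as a function of $\tau_p, \psi_p$ and the unknown nuisance parameters (the static component $b_k$ and dynamic components $d_{q,k}$), then maximize. First I would invoke $\gamma \approx 1$ so that, by \eqref{eqn_channel_resp}, $h_{p,k} \approx b_k$ and the residual $d_{p,k} \sim \mathcal{CN}(0, 1-\gamma)$ plays the role of additive Gaussian noise with a small variance. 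Under this approximation, for each past frame $q \in \mathcal{Q}_p$ the corrected CSI $\widehat{h}_{q,k} = \bar{h}_{q,k} e^{{\rm j}(2\pi f_k \widehat{\tau}_q + \widehat{\psi}_q)}$ from \eqref{eqn_Hk_synchronized} equals $b_k$ plus zero-mean Gaussian noise (since by hypothesis the past estimates are exact, the phase and timing rotations cancel perfectly, leaving $h_{q,k} + \text{noise}$). So the past frames collectively provide a noisy measurement of the common unknown $b_k$ for each $k$.

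Next I would treat $b_k$ itself as an unknown deterministic parameter and observe that $\bar{h}_{p,k} = h_{p,k} e^{-{\rm j}(2\pi f_k \tau_p + \psi_p)} \approx b_k e^{-{\rm j}(2\pi f_k \tau_p + \psi_p)} + \text{noise}$. Writing the joint negative log-likelihood, we get (up to constants) a sum of squared residuals over all $k$ and all frames in $\mathcal{Q}_p \cup \{p\}$: the past frames contribute $\sum_{q \in \mathcal{Q}_p}\sum_k |\widehat{h}_{q,k} - b_k|^2$ and the current frame contributes $\sum_k |\bar{h}_{p,k} e^{{\rm j}(2\pi f_k \tau + \psi)} - b_k|^2$. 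The key step is then to profile out $b_k$: for fixed $\tau_p, \psi_p$, the minimizing $\widehat{b}_k$ is the average of the $|\mathcal{Q}_p|+1$ aligned observations, and substituting it back collapses the objective. After cancelling the terms that do not depend on $\tau, \psi$, the cross term that survives is $-2\,\mathrm{Re}\{\,\overline{(\text{current aligned})}\cdot\sum_{q}\widehat{h}_{q,k}\,\}$ summed over $k$, which is exactly what appears inside \eqref{eqn_timing_err_est}–\eqref{eqn_CPE_est}. Maximizing over $\psi$ in closed form (a phase that aligns a complex sum) gives \eqref{eqn_CPE_est}, and substituting that optimal $\psi$ reduces the $\tau$-optimization to maximizing the modulus in \eqref{eqn_timing_err_est}, with the constraint $|\tau| \le \kappa T_{\rm s}/K$ coming directly from the prior $\tau_p \sim \mathrm{Uni}(-\kappa T_{\rm s}/K, \kappa T_{\rm s}/K)$ in Section \ref{subsec_timing_CFO_dist}.

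The main obstacle I anticipate is handling the profiling of $b_k$ cleanly while keeping track of which terms are constant with respect to $(\tau_p, \psi_p)$: one must verify that the $\|\widehat{b}_k\|^2$-type terms and the self-energy $|\bar{h}_{p,k}|^2$ terms (which are rotation-invariant) drop out, leaving only the cross-correlation term. A secondary subtlety is justifying that, under $\gamma \approx 1$, the noise variances across the past frames and the current frame are equal (so the ML average is an unweighted mean and there is no weighting subtlety), and that the dynamic-component noise in distinct frames is independent — both of which follow from the i.i.d. $\mathcal{CN}(0,1-\gamma)$ model for $d_{p,k}$. The constraint set truncation and the move from $\argmax$ of a real inner product (for $\psi$) to $\angle$ of the complex sum is the routine finishing touch.
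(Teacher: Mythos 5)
Your proposal is correct and follows essentially the same route as the paper: write the joint Gaussian likelihood of the current and (perfectly corrected) past frames with $b_k$ as a nuisance parameter, profile out $b_k$ (the paper does this with a power constraint $\sum_k |b_k|^2 = \gamma K$ rather than an unconstrained average, but both yield the same surviving cross-correlation term), and then read off \eqref{eqn_CPE_est} and \eqref{eqn_timing_err_est} by optimizing the phase in closed form, with the $\tau$-constraint coming from the uniform prior. No gaps.
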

\begin{proof}
See Appendix \ref{appdix0}
\end{proof}
To further simplify the complexity of the line-search in \eqref{eqn_timing_err_est}, we also have the following lemma:
\begin{lemma} \label{Th_ML_estimation_simpl}
In the quasi-static channel scenario where $\gamma \approx 1$ if for any $p \in \mathcal{P}$ the conditions of Lemma \ref{Th_ML_estimation} are satisfied, then the conditional ML estimates of $\tau_p, {\psi}_p$ are the solutions of the weighted least-squares problem:
\begin{flalign}
& \widehat{\tau_p}, \widehat{\psi}_p = \argmin_{\tau, \psi} \bigg\{ \sum_{k \in \bar{\mathcal{K}}} |\omega_{p,k}| & \nonumber \\
& \qquad \qquad \qquad {\Big[ \big(2\pi f_{k} (\tau - \bar{\tau}_p) + \psi - \widetilde{\mathscr{U}} \big( \angle \omega_{p,k} \big) \Big]}^2 \bigg\}, \!\!\!\! & \label{eqn_CPE_timing_simpl} 
\end{flalign}
where $\omega_{p,k} = {\bar{{{h}}}_{p,k}}^{*} \big( \sum_{q \in \mathcal{Q}_p} \widehat{{{h}}}_{q,k} \big) e^{-{\rm j} 2 \pi f_{k} \bar{\tau}_p}$, $\widehat{{{h}}}_{q,k}$ is as defined in \eqref{eqn_Hk_synchronized} and $\bar{\mathcal{K}}, \bar{\tau}_p, \widetilde{\mathscr{U}}(\cdot)$ are as defined in Lemma \ref{Th_method1_estimation_simpl}.
\end{lemma}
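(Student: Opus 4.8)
The plan is to recognize that the two-step estimator \eqref{eqn_timing_err_est}--\eqref{eqn_CPE_est} is the maximizer of a single real-valued objective, and then to linearize that objective about its optimum, exactly as in the proof of Lemma \ref{Th_method1_estimation_simpl} but with the averaged static component $\bar b_k$ there replaced by the running coherent sum $S_k \triangleq \sum_{q \in \mathcal{Q}_p} \widehat{h}_{q,k}$. First I would rewrite \eqref{eqn_synch_error_est} as a joint maximization: for any fixed $\tau$ the inner maximum of $\mathrm{Re}\{ e^{-{\rm j}\psi} \sum_{k} e^{-{\rm j} 2\pi f_k \tau} \bar{h}_{p,k}^{*} S_k \}$ over $\psi$ equals the magnitude in \eqref{eqn_timing_err_est} and is attained at the angle \eqref{eqn_CPE_est}, so $(\widehat{\tau}_p, \widehat{\psi}_p) = \argmax_{\tau,\psi} \sum_{k \in \mathcal{K}} \mathrm{Re}\{ e^{-{\rm j}(2\pi f_k \tau + \psi)} \bar{h}_{p,k}^{*} S_k \}$, which is the exact analogue of \eqref{eqn_timing_err_est_method1}.

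Next I would pre-rotate by the coarse estimate $\bar{\tau}_p$ of \eqref{eqn_CPE_baseline2_tau}. Writing $\omega_{p,k} = \bar{h}_{p,k}^{*} S_k e^{-{\rm j} 2\pi f_k \bar{\tau}_p}$, each summand becomes $|\omega_{p,k}|\cos\!\big( 2\pi f_k(\tau - \bar{\tau}_p) + \psi - \angle \omega_{p,k} \big)$. Under the hypotheses of Lemma \ref{Th_ML_estimation}, $\widehat{h}_{q,k} = h_{q,k}$ for $q \in \mathcal{Q}_p$ and $\bar{h}_{p,k} = h_{p,k} e^{-{\rm j}(2\pi f_k \tau_p + \psi_p)}$ by \eqref{eqn_Hk_gain_synchronized}; combining this with $\gamma \approx 1$, which forces $h_{p,k} \approx h_{q,k} \approx b_k$ via \eqref{eqn_channel_resp}, gives $\omega_{p,k} \approx |\mathcal{Q}_p| \, |b_k|^2 e^{{\rm j}(2\pi f_k(\tau_p - \bar{\tau}_p) + \psi_p)}$, hence $\angle \omega_{p,k} \approx 2\pi f_k(\tau_p - \bar{\tau}_p) + \psi_p$ on the correct $2\pi$ branch. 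Thus at $(\tau,\psi) = (\tau_p,\psi_p)$ all cosine arguments vanish, and by continuity the true maximizer keeps them small.

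Then I would insert $\cos\theta \approx 1 - \theta^2/2$; dropping the constant $\sum_k |\omega_{p,k}|$ and reversing the sign turns the maximization into $\argmin_{\tau,\psi} \sum_{k \in \bar{\mathcal{K}}} |\omega_{p,k}| \big[ 2\pi f_k(\tau - \bar{\tau}_p) + \psi - \widetilde{\mathscr{U}}(\angle \omega_{p,k}) \big]^2$, which is \eqref{eqn_CPE_timing_simpl}. Restricting the sum to $\bar{\mathcal{K}}$ is justified because subcarriers with $|\bar b_k| = 0$ carry no static-channel phase and contribute only noise to $\omega_{p,k}$; replacing the ordinary unwrapping $\mathscr{U}(\cdot)$ by the locally-averaged $\widetilde{\mathscr{U}}(\cdot)$ reconstructs the linear phase while suppressing per-subcarrier noise, and pre-rotating by $\bar{\tau}_p$ keeps consecutive phase increments below $\pi$ so that unwrapping is unambiguous even across the gaps in $\bar{\mathcal{K}}$ produced by deep fades.

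The hard part will be this last unwrapping argument: one must argue that $\widetilde{\mathscr{U}}(\angle \omega_{p,k})$ indeed recovers the globally consistent linear phase $2\pi f_k(\tau_p - \bar{\tau}_p) + \psi_p$ over all of $\bar{\mathcal{K}}$ (not merely up to spurious $2\pi$ jumps across fades), so that the minimizer of the quadratic surrogate coincides with the $\argmax$ of the exact objective — everything else is the routine Taylor-expansion bookkeeping already carried out for Lemma \ref{Th_method1_estimation_simpl}, which can be invoked verbatim with $\bar b_k \mapsto S_k$.
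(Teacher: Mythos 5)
Your proof is correct and follows essentially the same route as the paper, whose proof of this lemma is simply the statement that it is ``similar to Lemma \ref{Th_method1_estimation_simpl}'': you recombine \eqref{eqn_timing_err_est}--\eqref{eqn_CPE_est} into a single joint correlation maximization (already present as \eqref{eqn_ML_est_5} in the proof of Lemma \ref{Th_ML_estimation}) and then repeat the pre-rotation, small-angle, and $\cos\theta \approx 1-\theta^2/2$ reduction with $\bar{b}_k$ replaced by $\sum_{q\in\mathcal{Q}_p}\widehat{h}_{q,k}$. The caveat you flag about the global $2\pi$-consistency of $\widetilde{\mathscr{U}}(\cdot)$ across gaps in $\bar{\mathcal{K}}$ is the same looseness present in the paper's own argument for Lemma \ref{Th_method1_estimation_simpl}, so nothing is missing relative to the paper.
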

\begin{proof}
Proof is similar to Lemma \ref{Th_method1_estimation_simpl}.
\end{proof}
Note that the solution to the weighted least-squares problem can be found in closed-form, thus avoiding the search complexity of \eqref{eqn_timing_err_est}. 
Using \eqref{eqn_synch_error_est} or \eqref{eqn_CPE_timing_simpl}, we propose a CPE and timing offset estimation approach for each CSI frame $p \in \mathcal{P}$, as shown in Algorithm \ref{Algo1}. Since the number of past samples $q < p$ is small for small values of $p$, we use the approach in Algorithm \ref{Algo0} for $p < P/10$ to minimize error accumulation. 
\begin{remark}
Intuitively, the algorithm works sequentially such that when dealing with the estimation for frame $p$, CPE and timing errors for all the chronologically preceding CSI frames have already been estimated. The algorithm then searches over the possible phase and timing error corrections for frame $p$ to find the one that ensures the highest correlation of the corrected CSI $\widehat{{{h}}}_{p,k}$ to its preceding corrected CSIs $\widehat{{{h}}}_{q,k}$.
\end{remark}
\begin{algorithm}
\label{Algo1}
\caption{Forward-pass for $\widehat{\tau}_p, \widehat{\psi}_p$}
\begin{algorithmic} 
\STATE Given: $\bar{{{h}}}_{p,k}$ for $p \in \mathcal{P}, k \in \mathcal{K}$
\FOR{$p \in \mathcal{P}$}
\STATE Compute $\bar{\tau}_p, \bar{\psi}_p$ from \eqref{eqn_CPE_timing_baseline2}.
\STATE{// Can be computed in parallel to improve speed.}
\ENDFOR
\FOR{$k \in \mathcal{K}$}
\STATE Compute $\bar{b}_k$ from \eqref{eqn_b_hat_est_LoS}.
\STATE{// Can be computed in parallel to improve speed.}
\ENDFOR
\FOR{$p=0:1:\lfloor P/10 \rfloor$}
\STATE Calculate and save $\widehat{\tau}_p, \widehat{\psi}_p$ from  \eqref{eqn_CPE_timing_simpl_method1}.
\STATE Calculate $\widehat{{{h}}}_{p,k}$ from \eqref{eqn_Hk_synchronized}.
\ENDFOR
\FOR{$p=(\lfloor P/10 \rfloor+1):1:(P-1)$}
\STATE Define $\mathcal{Q}_p = \big\{ q \in \mathcal{P} \big| q < p \big\}$
\STATE Calculate and save $\widehat{\tau}_p, \widehat{\psi}_p$ from \eqref{eqn_synch_error_est} or \eqref{eqn_CPE_timing_simpl}.
\STATE Calculate $\widehat{{{h}}}_{p,k}$ from \eqref{eqn_Hk_synchronized}.
\ENDFOR 
\STATE{// For \eqref{eqn_CPE_timing_simpl_method1} and \eqref{eqn_CPE_timing_simpl} we use $\bar{\mathcal{K}} = \{k \in \mathcal{K} \big| {|\bar{b}_k|}^2 > 0.1 \}$}
\end{algorithmic}
\end{algorithm}
Note that one lacuna of Algorithm \ref{Algo1} is that the estimation at CSI frame $p$ in \eqref{eqn_synch_error_est} only relies on the preceding CSI frames $q < p$. This can cause large errors for the first few CSI frames since there isn't sufficient history, and thus we had to resort to Algorithm \ref{Algo0} for $p < P/10$. To further improve the estimates obtained with Algorithm \ref{Algo1}, an (optional)  backward-pass can also be used as shown in Algorithm \ref{Algo2}, where we redefine $\mathcal{Q}_p = \{q \in \mathcal{P} | q > p\}$ to use the future CSI frames to estimate the CPE and timing offset of the current CSI frame.
\begin{algorithm}
\label{Algo2}
\caption{Backward-pass for $\widehat{\tau}_p, \widehat{\psi}_p$}
\begin{algorithmic} 
\STATE Obtain $\widehat{\tau}_p, \widehat{\psi}_p$ for $p \in \mathcal{P}$ from Algorithm \ref{Algo1} with \eqref{eqn_synch_error_est} or \eqref{eqn_CPE_timing_simpl}.
\FOR{$p=\lfloor P/2 \rfloor:(-1):0$}
\STATE Redefine $\mathcal{Q}_p = \{q \in \mathcal{P} | q > P/2\}$
\STATE Re-calculate and save $\widehat{\tau}_p, \widehat{\psi}_p$ from \eqref{eqn_synch_error_est} or \eqref{eqn_CPE_timing_simpl}.
\ENDFOR
\STATE{// For \eqref{eqn_CPE_timing_simpl} we use $\bar{\mathcal{K}} = \{k \in \mathcal{K} \big| {|\bar{b}_k|}^2 > 0.1 \}$}
\end{algorithmic}
\end{algorithm}
Finally, using either Algorithm \ref{Algo0} or Algorithm \ref{Algo1} or using both Algorithms \ref{Algo1} and \ref{Algo2}, we obtain the CPE and timing error estimates: $\widehat{\tau}_p, \widehat{\psi}_p$ for all $p \in \mathcal{P}, k \in \mathcal{K}$ and compute the corrected CSI $\widehat{{{h}}}_{p,k}$ as in  \eqref{eqn_Hk_synchronized} for performing the sensing operations. 

\section{Evaluation Results} \label{sec_eval_results}
For evaluations, we use stationary WiFi TX and RX operating on a $20$MHz channel at $f_{\rm c} = 5.2$GHz (channel $40$ of the $5$ GHz ISM band). The transmission is via OFDM modulation with a symbol duration $T_{\rm s} = 3.2 \mu$s, cyclic prefix duration $T_{\rm cp}=0.8\mu$s and either $K=64$ sub-carriers (for IEEE 802.11ac CSI) or $K=256$ sub-carriers (for 802.11ax CSI). For the evaluations, we use both a combination of simulated CSI data and real-world CSI data, as explained below. 

\subsection{Distribution of gain and phase errors} \label{subsec_CSI_gain_phase_err_dist}
We first present an analysis of CSI gain and phase errors in a real-world WiFi system under a fully static channel. For this, we use a single-antenna TX that transmits NDP frames every $T_{\rm rep}=50$ms to an RX. We consider two types of RXs: (i) an Intel AX 210 WiFi module and (ii) a Google Nexus 6 phone with a BCM4358 WiFi chip, each with two antennas. 

First, we present a sample measurement of the uncompensated CSI powers $\Gamma_{p}$ (see \eqref{eqn_gamma_define}) for the two RXs in Figs. \ref{Fig_gain_intel} and \ref{Fig_gain_nexus}. For same measurement, we also plot the incremental CSI power  $\Delta \widetilde{\Gamma}_{p}$ (see \eqref{eqn_delta_gamma_defn}) in Figs.~\ref{Fig_gainDiff_intel} and \ref{Fig_gainDiff_nexus}. As can be seen from the results, the clusters in $\Delta \widetilde{\Gamma}_{p}$ are much more prominent than $\widetilde{\Gamma}_{p}$. This is due to the large-scale gain (see Section \ref{subsec_AGC_gain_dist}). Secondly, we also observe from Fig.~\ref{Fig_gainDiff_intel} that the AGC gain indeed increases in multiples of a step size $\lambda$ (as exploited in Section \ref{subsec_gain_uniformG}). 
\begin{figure}[!htb]
\centering
\subfloat[$\widetilde{\Gamma}_{p}$, Intel AX210]{\includegraphics[width= 0.248\textwidth]{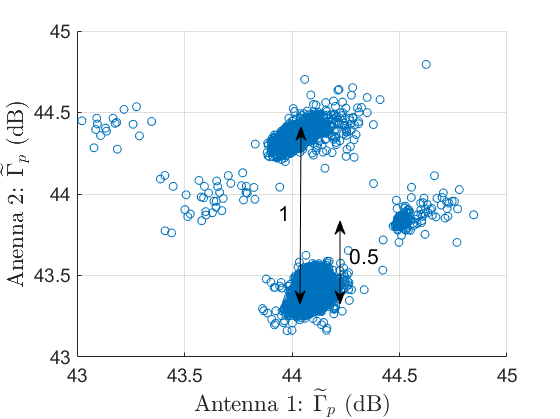} \label{Fig_gain_intel}} 
\subfloat[$\widetilde{\Gamma}_{p}$, BCM 4358]{\includegraphics[width= 0.248\textwidth]{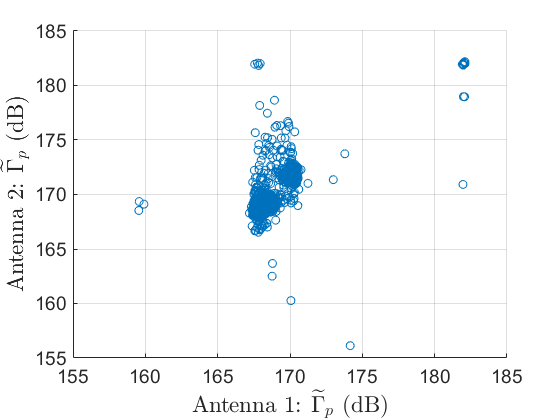} \label{Fig_gain_nexus}} \\
\subfloat[$\Delta \widetilde{\Gamma}_{p}$, Intel AX210]{\includegraphics[width= 0.248\textwidth]{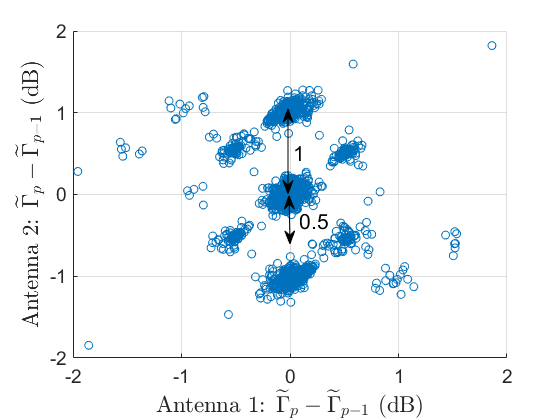} \label{Fig_gainDiff_intel}}
\subfloat[$\Delta \widetilde{\Gamma}_{p}$, BCM 4358]{\includegraphics[width= 0.248\textwidth]{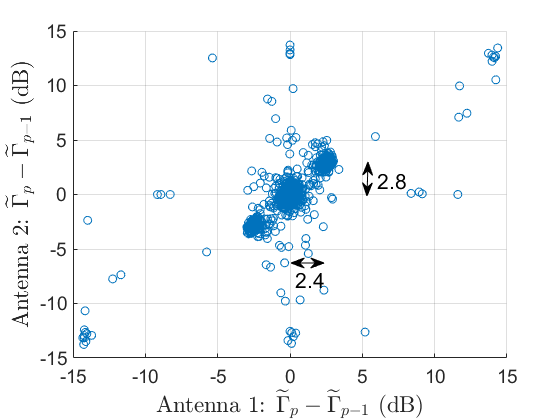} \label{Fig_gainDiff_nexus}}
\caption{Scatter plot of CSI power $\Gamma_{p}$ and incremental power $\Delta \Gamma_{p}$ for the two WiFi RXs in an example static channel.}
\label{Fig_gain_statistics}
\end{figure}

Since it is hard to know the ground truth values phase errors of $\tau_p, \psi_p$ between a TX and RX, here we evaluate the distribution of $\widehat{\tau}_p, \widehat{\psi}_p$ obtained using Algorithm \ref{Algo2}, with \eqref{eqn_CPE_timing_simpl}.\footnote{The results have also been found to be similar when using the other phase error estimation algorithms.} To ensure the estimates are accurate, i.e., $ \widehat{\tau}_p \approx \tau_p, \widehat{\psi}_p \approx \psi_p$, the measurements are made in a high signal-to-noise ratio (SNR), strongly LoS channel. The marginal distributions of $\widehat{\tau}_p$ and $\widehat{\psi}_p$ are depicted in Figs.~\ref{Fig_timing_pdf} and \ref{Fig_phase_pdf}, which verifies that the assumption of uniform marginal distributions for $\widehat{\tau}_p$ and ${\psi}_p$ is reasonably accurate. The cross-covariance coefficients of $\widehat{\tau}_p, e^{{\rm j}\widehat{\psi}_p}$ for two CSI frames $p$ and $p-a$ are depicted in Figs.~\ref{Fig_timing_crosscov} and ~\ref{Fig_phase_crosscov}, as a function of $a$. As can be seen from the results, the timing and phase errors for two CSI frames are uncorrelated for $T_{\rm rep} \geq 50$ms, which justifies the modeling used in Section \ref{subsec_timing_CFO_dist}. 
\begin{figure}[!htb]
\centering
\subfloat[Timing error $\widehat{\tau}_p$]{\includegraphics[width= 0.248\textwidth]{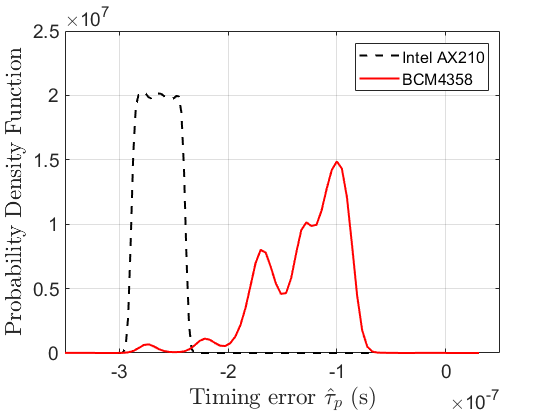} \label{Fig_timing_pdf}} 
\subfloat[Phase error $\widehat{\psi}_p$]{\includegraphics[width= 0.248\textwidth]{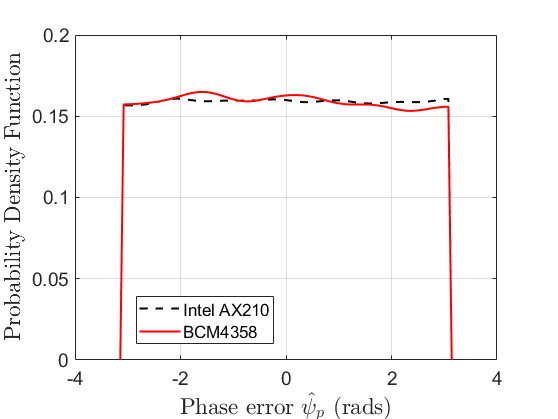} \label{Fig_phase_pdf}} \\
\subfloat[Timing error $\widehat{\tau}_p$]{\includegraphics[width= 0.248\textwidth]{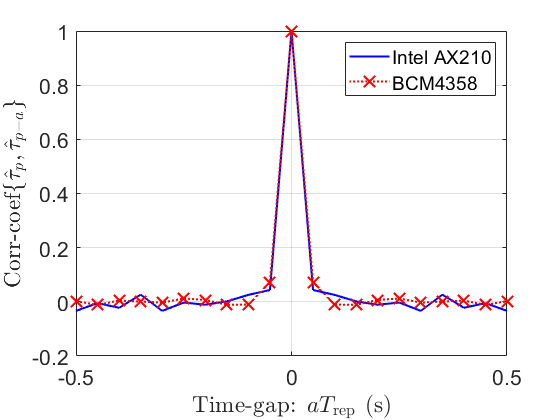} \label{Fig_timing_crosscov}} 
\subfloat[Phase error $\widehat{\psi}_p$]{\includegraphics[width= 0.248\textwidth]{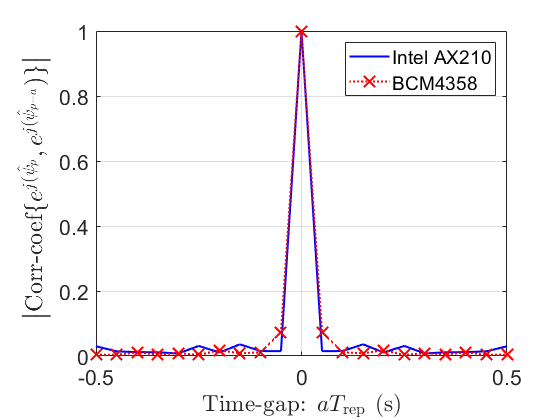} \label{Fig_phase_crosscov}}
\caption{Marginal probability density function and auto-correlation function of $\widehat{\tau}_p$ and $\widehat{\psi}_p$ (using Algorithm \ref{Algo2} with \eqref{eqn_CPE_timing_simpl}), for an example high SNR, static channel.}
\label{Fig_PDF_phase_timing}
\end{figure}

\subsection{Estimation performance in simulated channels} \label{subsec_eval_simulated}
Since knowing the true values of $g_p, \tau_p, \psi_p$ between a TX and RX is difficult in a real-world setting, we compare the performance of the different algorithms over $2000$ realizations of simulated channels. For the simulations, we use $T_{\rm rep} = 100$ms, $\mathcal{P} = \{0,1,...,299\}$ and $\mathcal{K} = \{0,1,...,255\}$. In each realization, $h_{p,k}$ is generated using \eqref{eqn_channel_resp}, where the static component $b_{k}$ is generated using the 802.11ax Model-C channel model, and for the dynamic component we consider 2 cases: 
\begin{enumerate}
\item[(i)] $d_{p,k} \sim \mathcal{CN}(0, 1-\gamma)$ is i.i.d. for each $p,k$. 
\item[(ii)] $d_{p,k} = \alpha_p e^{-{\rm j} 2 \pi f_k (\tau+\tau_0)}$, where $\tau \sim \mathrm{Uni}[0,300ns)$, $\tau_0$ is delay of LoS path in $b_{k}$ and $\alpha_p$ is a complex Gaussian process of variance $1-\gamma$ with the Doppler power spectrum having a non-zero support on $[0.5, 1]$ Hz. 
\end{enumerate}
Among impairments, for the large-scale gain, we use $g^{(1)}_p$ as a real Gaussian process with standard deviation $0.2$ dB and a Doppler power spectrum with non-zero support on $[0, 0.1]$ Hz. For AGC gain, we use $g^{(2)}_p \in \{-0.5, 0, 0.5\}$ with probabilities of $\{0.2, 0.6, 0.2\}$, respectively. For phase errors, we model $\tau_p, \psi_p$ to be i.i.d. for each $p \in \mathcal{P}$ with marginal distributions $\tau_p \sim \mathrm{Uni}[0, {10}^{-7})$ and $\psi_p \sim \mathrm{Uni}[-\pi,\pi)$.
For quantifying the performance of the algorithms, we define the correlation coefficient between $d_{p,k}$ and the dynamic component of cleaned CSI $\widehat{h}_{p,k}$ as:
\begin{eqnarray}
\chi \triangleq \frac{{\Big| \sum_{p \in \mathcal{P}} \sum_{k \in \mathcal{K}} {\big(\widehat{h}_{p,k} - \widehat{b}_k \big)}^{*} d_{p,k} e^{{\rm j} 2 \pi f_k \tau} \Big|}^2}{(1-\gamma) K P{\Big(\sum_{p \in \mathcal{P}} \sum_{k \in \mathcal{K}} {\big|\widehat{h}_{p,k} - \widehat{b}_k \big|}^2 \Big)} } ,
\end{eqnarray}
where $\widehat{b}_k \triangleq \sum_{p \in \mathcal{P}} \widehat{h}_{p,k} / P$ is the static component of cleaned CSI and $$\tau = \max_{\tau} \bigg| \sum_k b_k \widehat{b}_k^{*} e^{{\rm j} 2 \pi f_k \tau} \bigg|,$$ is the timing offset between the true CSI and cleaned CSI. Note that $\chi^2/(1-\chi^2)$ is representative of the SNR in estimating $d_{p,k}$ from cleaned CSI $\widehat{h}_{p,k}$.

The post-cleaning SNR for different gain error correction methods under ideal phase error compensation, i.e., $\widehat{\tau}_p = \tau_p$ and $\widehat{\psi}_p = \psi_p$ is depicted in Fig.~\ref{Fig_compareGainAlgos_vs_gamma} and their computation time is tabulated in Table~\ref{Table_compute_time}. 
As can be seen, at $\gamma=0.9$, Algorithm \ref{Algo4} outperforms both baseline methods by $\geq 100$\% (for ${d}_{p,k}$ of type (i)) and by $40$\% (for ${d}_{p,k}$ of type (ii)). However, for $\gamma > 0.95$, we observe that \eqref{eqn_baseline_normPower} yields the best performance. This is because although the estimation of $g^{(2)}_p$ is accurate, the residual error in estimating $g^{(1)}_p$ in Algorithm \ref{Algo4} dominates over the sensing signal for $\gamma > 0.95$. Under such conditions, completely ignoring CSI power variations (as done by \eqref{eqn_baseline_normPower}) yields the best performance.\footnote{The exact transition point $\gamma = 0.95$ is of course subjective, and increases as we include additive channel noise to \eqref{eqn_H_k_est}.} 
We also observe that Algorithm \ref{Algo4} has a slightly lower computation time than \eqref{eqn_baseline_DBscan} and Algorithm \ref{Algo5}, since its complexity only scales linearly with $P$, as shown in Table \ref{Table_compute_time}. 
\begin{figure}[!htb]
\centering
\subfloat[${d}_{p,k}$ of type (i)]{\includegraphics[width= 0.45\textwidth]{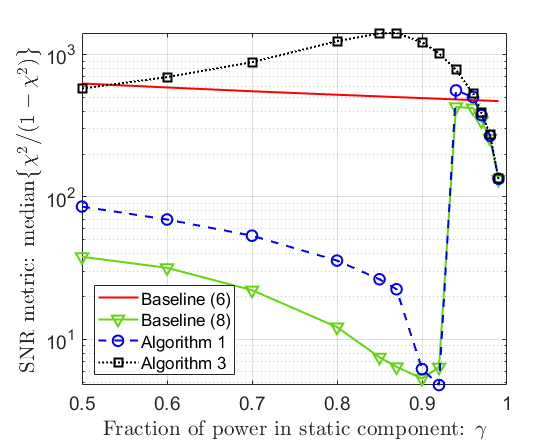} \label{Fig_dynType1_gainAlgos}} \\
\subfloat[${d}_{p,k}$ of type (ii)]{\includegraphics[width= 0.45\textwidth]{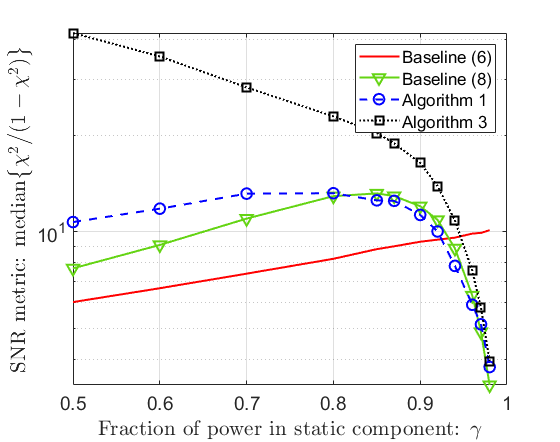} \label{Fig_dynType5_gainAlgos}}
\caption{Median SNR for estimating the dynamic component ${d}_{p,k}$ from cleaned CSI $\widehat{h}_{p,k}$, where $g_p$ is corrected using different algorithms, and with ideal correction of $\tau_p,\psi_p$.}
\label{Fig_compareGainAlgos_vs_gamma}
\end{figure}

The post-cleaning SNR of different phase error correction methods under ideal gain error compensation, i.e., $\widehat{g}_p = g_p$ is depicted in Fig.~\ref{Fig_comparePhaseAlgos_vs_gamma} and their computation time is tabulated in Table~\ref{Table_compute_time}. As can be seen from the results, all the proposed algorithms outperform the baseline methods \eqref{eqn_CPE_timing_baseline1} and \eqref{eqn_CPE_timing_baseline2}. Among the proposed algorithms, the ML estimators \eqref{eqn_timing_err_est_method1} and \eqref{eqn_synch_error_est} yield the best performance\footnote{The dip in their performance for $\gamma \geq 0.95$ is due to finite search resolution for $\tau$ in \eqref{eqn_timing_err_est_method1} and \eqref{eqn_synch_error_est}.},
but they are too computationally cumbersome as shown in Table~\ref{Table_compute_time}. Taking into consideration both performance and computation speed, we conclude that Algorithm \ref{Algo0} with \eqref{eqn_CPE_timing_simpl_method1} or Algorithm \ref{Algo1} with \eqref{eqn_CPE_timing_simpl} yield the best trade-off between performance and computation time. In fact at $\gamma = 0.9$, their performance is at par with the ML estimators, and they outperform both baselines by $> 1000$\% (for ${d}_{p,k}$ of type (i)) and $> 200$\% (for ${d}_{p,k}$ of type (ii)), respectively.
\begin{figure}[!htb]
\centering
\subfloat[${d}_{p,k}$ of type (i)]{\includegraphics[width= 0.45\textwidth]{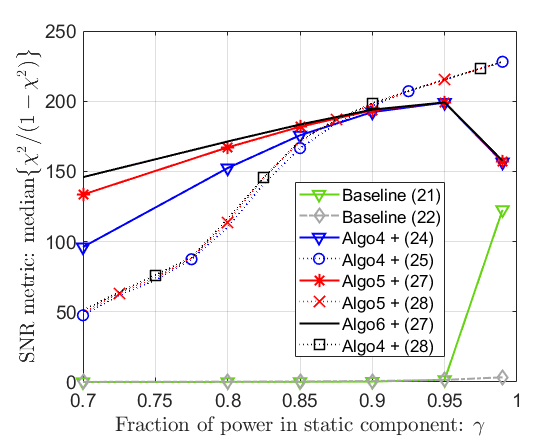} \label{Fig_dynType1_phaseAlgos}} \\
\subfloat[${d}_{p,k}$ of type (ii)]{\includegraphics[width= 0.45\textwidth]{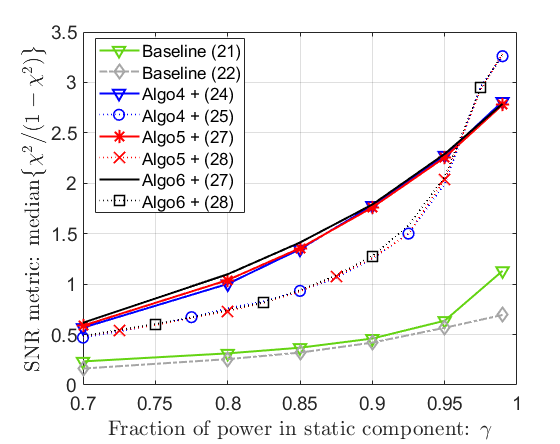} \label{Fig_dynType5_phaseAlgos}}
\caption{Median SNR for estimating the dynamic component ${d}_{p,k}$ from cleaned CSI $\widehat{h}_{p,k}$, where $\tau_p, \psi_p$ are corrected using different algorithms, and with ideal correction of $g_p$.}
\label{Fig_comparePhaseAlgos_vs_gamma}
\end{figure}
\begin{table}[h!]
\centering
\caption{Computation time (in msec) per batch of $P=300$ CSI frames and computation complexity for the proposed methods.} \label{Table_compute_time}
\renewcommand{\arraystretch}{1.5} 
\begin{tabular}{| l | c | c |} 
 \hline 
Methods & Mean Time (ms) & Computation Complexity\tablefootnote{For computation complexity, we use $I$ to indicate number of iterations where applicable.} \\
\hline
Baseline: \eqref{eqn_baseline_normPower} & \color[rgb]{0,0,1} $4.27$ & $\mathrm{O}(PK)$\\
\hline
Baseline: \eqref{eqn_baseline_DBscan} & \color[rgb]{0.1,0,0.9} $12.65$ & $\mathrm{O}(PK + P \log P)$ \\
\hline
Algorithm \ref{Algo5} & \color[rgb]{0.1,0,0.9} $13.75$ & $\mathrm{O}(PK + P \log P)$\\
\hline
Algorithm \ref{Algo4} & \color[rgb]{0.1,0,0.9} $10.98$ & $\mathrm{O}(PK + P I)$ \\
\hline
Baseline: \eqref{eqn_CPE_timing_baseline1} & \color[rgb]{0.2,0,0.8} $24.7$ & $\mathrm{O}(PK)$ \\
\hline
Baseline: \eqref{eqn_CPE_timing_baseline2} & \color[rgb]{0.1,0,0.9} $9.0$ & $\mathrm{O}(PK)$ \\
\hline
Algorithm \ref{Algo0} + \eqref{eqn_timing_err_est_method1} & \color[rgb]{0.7,0,0.3} $443.6$ & $\mathrm{O}(PKI)$ \\
\hline
Algorithm \ref{Algo0} + \eqref{eqn_CPE_timing_simpl_method1} & \color[rgb]{0.3,0,0.7} $40.6$ & $\mathrm{O}(PK)$ \\
\hline
Algorithm \ref{Algo1} + \eqref{eqn_synch_error_est} & \color[rgb]{0.7,0,0.3} $450.5$ & $\mathrm{O}(PKI)$ \\
\hline
Algorithm \ref{Algo1} + \eqref{eqn_CPE_timing_simpl} & \color[rgb]{0.3,0,0.7} $44.5$ & $\mathrm{O}(PK)$ \\
\hline
Algorithm \ref{Algo2} + \eqref{eqn_synch_error_est} & \color[rgb]{0.9,0,0.1} $669.2$ & $\mathrm{O}(PKI)$ \\
\hline
Algorithm \ref{Algo2} + \eqref{eqn_CPE_timing_simpl} & \color[rgb]{0.4,0,0.6} $61.0$ & $\mathrm{O}(PK)$ \\
\hline
\end{tabular}
\end{table}

\subsection{Application to respiration rate estimation} \label{subsec_eval_real}
\begin{figure}[!htb]
\centering
\includegraphics[width= 0.45\textwidth]{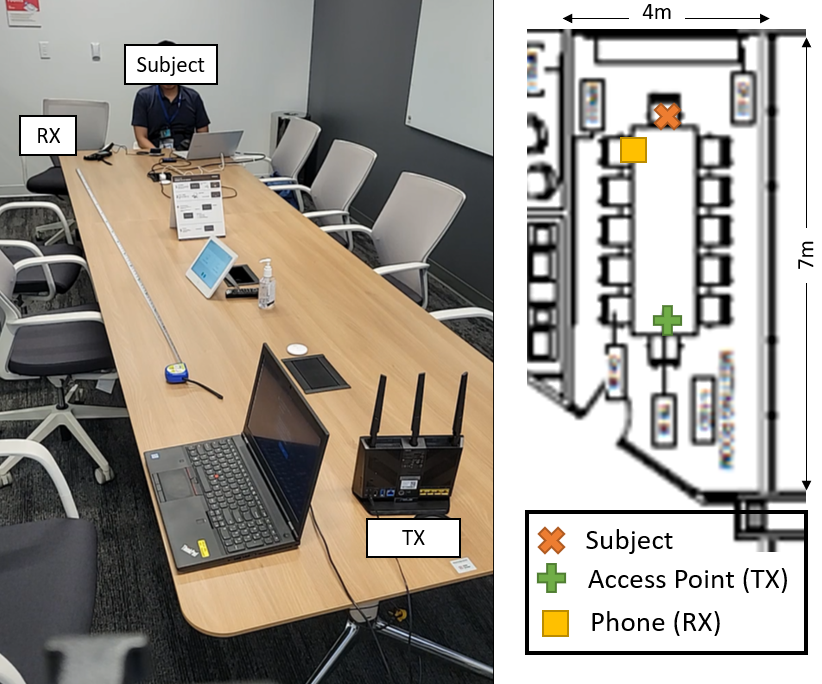}
\caption{Image of measurement setup used to collect the respiration rate data for one example episode}
\label{Fig_respiration_setup}
\end{figure}
In this section, we validate the performance of the proposed CSI preprocessing algorithms in a real-world sensing application of respiration rate monitoring. In this experiment, a stationary WiFi TX and RX are setup in a room along with a subject whose respiration rate has to be monitored. The dimensions of the room are $7{\rm m} \times 4 {\rm m}$, and an illustration of the setup is depicted in Fig.~\ref{Fig_respiration_setup}. The TX transmits WiFi packets (beacons) on channel $155$ of the $5$ GHz ISM band with $20$ MHz bandwidth and $K=64$ sub-carriers at intervals of $T_{\rm rep} = 100$ms. In each measurement episode $(i)$, $P=500$ CSI samples are collected (i.e., $50$ secs). In total, $75$ such episodes are considered, spanning different configurations of TX, RX, and subject locations. In each episode, the obtained CSI $\{\widetilde{h}_{p,k} | p \in \mathcal{P}, k \in \mathcal{K} \}$ is preprocessed using one of the gain and phase correction methods to obtain $\{\widehat{h}_{p,k} | p \in \mathcal{P}, k \in \mathcal{K} \}$ and then the Doppler power spectrum is estimated as:
\begin{eqnarray} \label{eqn_doppler_pow_metric}
\mathscr{H}(\nu) = \sum_{k \in \mathcal{K}} {\bigg| \sum_{p \in \mathcal{P}} \widehat{h}_{p,k} e^{-{\rm j} 2 \pi \nu p T_{\rm rep}} \bigg|}^2 , 
\end{eqnarray}
for $\nu = \{0.1:0.02:0.5\}$ Hz. For each episode $(i)$, the estimation SNR is computed as: 
\begin{eqnarray}
\mathrm{SNR}(i) = \Bigg( \sum_{|\nu - \nu_0| \leq 0.02} \mathscr{H}(\nu) \Bigg) \Bigg/ \Bigg( \sum_{|\nu - \nu_0| > 0.02} \mathscr{H}(\nu) \Bigg) , \nonumber
\end{eqnarray}
where $\nu_{0}$ is the ground-truth respiration rate (in Hz) of the subject as measured with a force belt. The median SNR across the $75$ episodes for different combinations of gain and phase compensation methods is tabulated in Table~\ref{Table_breathing}. 
\begin{table}[h!]
\centering
\caption{Breathing rate SNR under different CSI preprocessing methods.} \label{Table_breathing}
\renewcommand{\arraystretch}{1.5} 
\begin{tabular}{| c | c | c | c | c |} 
\hline 
\textbf{Median SNR} & \multicolumn{4}{ c |}{ Gain correction methods $\rightarrow$ } \\
\hline 
Phase method $\downarrow$ & Eqn.\eqref{eqn_baseline_normPower} & Eqn.\eqref{eqn_baseline_DBscan} & Algo.\ref{Algo5} & Algo.\ref{Algo4} \\
\hline
Eqn.\eqref{eqn_CPE_timing_baseline1} & \color[rgb]{0.9,0,0.1} $0.591$ & \color[rgb]{0.8,0,0.2} $0.611$ & \color[rgb]{0.8,0,0.2} $0.614$ & \color[rgb]{0.8,0,0.2} $0.615$ \\
\hline
Eqn.\eqref{eqn_CPE_timing_baseline2} & \color[rgb]{0.6,0,0.4} $0.663$ & \color[rgb]{0.7,0,0.3} $0.633$ & \color[rgb]{0.8,0,0.2} $0.616$ & \color[rgb]{0.6,0,0.4} $0.634$ \\
\hline
Algo.\ref{Algo0} + \eqref{eqn_timing_err_est_method1} & \color[rgb]{0.2,0,0.8} $0.795$ & \color[rgb]{0.6,0,0.4} $0.674$ & \color[rgb]{0.6,0,0.4} $0.651$ & \color[rgb]{0.3,0,0.7} $0.737$ \\
\hline
Algo.\ref{Algo0} + \eqref{eqn_CPE_timing_simpl_method1} & \color[rgb]{0,0,1} $0.799$ & \color[rgb]{0.6,0,0.4} $0.675$ & \color[rgb]{0.6,0,0.4} $0.652$ & \color[rgb]{0.3,0,0.7} $0.739$ \\
\hline
Algo.\ref{Algo1} + \eqref{eqn_synch_error_est} & \color[rgb]{0.2,0,0.8} $0.795$ & \color[rgb]{0.6,0,0.4} $0.674$ & \color[rgb]{0.6,0,0.4} $0.651$ & \color[rgb]{0.3,0,0.7} $0.738$ \\
\hline
Algo.\ref{Algo1} + \eqref{eqn_CPE_timing_simpl} & \color[rgb]{0,0,1} $0.799$ & \color[rgb]{0.6,0,0.4} $0.675$ & \color[rgb]{0.6,0,0.4} $0.652$ & \color[rgb]{0.3,0,0.7} $0.739$ \\ 
\hline
Algo.\ref{Algo2} + \eqref{eqn_synch_error_est} & \color[rgb]{0.2,0,0.8} $0.795$ & \color[rgb]{0.6,0,0.4} $0.674$ & \color[rgb]{0.6,0,0.4} $0.651$ & \color[rgb]{0.3,0,0.7} $0.737$ \\
\hline
Algo.\ref{Algo2} + \eqref{eqn_CPE_timing_simpl} & \color[rgb]{0,0,1} $0.799$ & \color[rgb]{0.6,0,0.4} $0.675$ & \color[rgb]{0.6,0,0.4} $0.652$ & \color[rgb]{0.3,0,0.7} $0.739$ \\
\hline
\end{tabular}
\end{table}
As can be seen from the results, the proposed phase-correction methods beat both the baseline methods by $> 20$\% in median SNR. However, for gain correction, we see that the baseline method \eqref{eqn_baseline_normPower} yields the best performance, followed by our proposed method in Algorithm \ref{Algo4}. The superior performance of \eqref{eqn_baseline_normPower} is because for this example sensing task we have $\gamma > 0.99$ where, as also observed in Section \ref{subsec_eval_simulated}, ignoring CSI power variation yields the best performance. Thus, these observations are well aligned with our simulation results. 

\section{Conclusion and Future directions} \label{sec_conclusions}
This paper develops a mathematical model for the gain and phase errors in WiFi CSI and proposes several algorithms for resolving the gain and phase errors. The analysis of the real-world data shows that the gain errors can be treated as the combination of a slow-changing large-scale gain and a fast-changing AGC gain that attains a value from a discrete uniform grid. It is shown that for the true channel, the CSI power is approximately Gaussian distributed in the dB scale, and its ML estimate from the observed CSI is wrapped Gaussian distributed, due to the ambiguity introduced by the AGC gain. This can be leveraged to remove the gain errors, as shown in Algorithm \ref{Algo4}. 
The analysis of real-world data also shows that the CSI phase errors for packets spaced $\geq 50$ms apart are independently distributed. It is also shown that the conditional ML estimation of the phase errors can be simplified to a linear weighted least-squares estimation problem under certain simplifying assumptions. 
Simulations under standard-compliant channel models show that when the sensing signal is not i.i.d., cleaning the CSI is more difficult, and hence, choosing a good preprocessing algorithm is important. Results also show that the proposed theoretically justified gain-correction and phase-correction algorithms, achieve $40$\% and $200$\% higher estimation SNR, respectively, compared to baseline methods at $\gamma=0.9$, without significantly increasing computation complexity. We also conclude that when the sensing signal is very weak, i.e., $\gamma > 0.95$, ignoring the gain variation yields the best performance.
The proposed algorithms are also applied in a real-world test bed for performing respiration rate monitoring using CSI. The results show that the proposed phase-correction methods can indeed improve the estimation SNR by $20$\% compared to baseline methods, while for gain-correction, the method in \eqref{eqn_baseline_normPower} works best (since we have $\gamma > 0.99$ in this test).

There is significant scope for further work on CSI preprocessing. Firstly, the proposed CPE model ignored its relationship with the CFO. Once the steps of Section \ref{sec_CPE_timing_correct} are done, a further refinement step to $\widehat{\psi}_p$ can be envisioned by exploiting the relationship. Secondly, although the proposed pre-processing approaches can be used in a multi-antenna setting (by performing CSI cleaning per antenna pair) this may be sub-optimal. Using the fact that the timing errors $\tau_p$ are shared by all antenna pairs and the CPE $\psi_p$ is correlated across antennas, better joint estimation approaches can be envisioned. Finally, due to variation in the TX oscillator frequency and other RF impairments, the dominant channel component $b_k$ may slowly change over time. This change is non-negligible when $P$ is large, and thus algorithms that can accommodate slowly varying $b_k$ can be devised. 

\begin{appendices}
\section{} \label{appdix1p5}
\begin{proof}[Proof of Lemma \ref{remark1}]
From \eqref{eqn_channel_resp}, note that:
\begin{eqnarray}
\beta_p \triangleq \frac{2}{(1-\gamma)} \sum_{k \in \mathcal{K}} {| {b}_{k} + {d}_{p,k} |}^2,
\end{eqnarray}
is a standard, non-central Chi-squared distributed random variable with a non-centrality parameter of $\frac{2 K \gamma}{(1-\gamma)}$ and $2K$ degrees of freedom, and is independent for each $p$. Correspondingly, its mean and variance are:
\begin{eqnarray}
\mu_{\beta} = \frac{2K}{1-\gamma}, \ \sigma_{\beta}^2 = 4K \frac{1 + \gamma}{1-\gamma}. \nonumber
\end{eqnarray}
In other words, variance of $(1-\gamma) \beta_p \rightarrow 0$ as $\gamma \rightarrow 1$. Correspondingly from \eqref{eqn_gamma_true_define} we have:
\begin{eqnarray}
\Gamma_p &=& 10 \times \log_{10} \big[(1-\gamma) \beta_p \big/ 2K \big] \nonumber \\
&\approx & \frac{5 [ (1-\gamma) \beta_p - 2K]}{K} \nonumber \\
&=&  10\sqrt{\frac{1 - \gamma^2}{K}}\frac{(\beta_p - \mu_{\beta})}{\sigma_{\beta}} \label{eqn_gamma_temp} 
\end{eqnarray}
where the second step follows by using a first-order Taylor expansion about $\beta_p = \mu_p$, which is accurate for $\gamma \approx 1$. It is well known that either as $2K \rightarrow \infty$ or $2 K \gamma \big/ (1-\gamma) \rightarrow \infty$, $(\beta_p - \mu_{\beta})/\sigma_{\beta} \rightarrow \mathcal{N}(0,1)$ \cite{Muirhead1982}. In our case since number of sub-carriers $K \gg 1$ and also $\gamma \approx 1$, both conditions for convergence of $\beta_p$ are satisfied and hence from \eqref{eqn_gamma_temp} we have that $\Gamma_p$ is approximately Gaussian distributed with a mean of $0$ and a small variance $100 (1 - \gamma^2)\big/K$.
\end{proof}

\begin{proof}[Proof of Lemma \ref{remark2}]
Note that from \eqref{eqn_H_k_est}, \eqref{eqn_gamma_define}, \eqref{eqn_gamma_true_define} and the AGC gain model, we can write: 
\begin{eqnarray}
\widetilde{\Gamma}_{p} = {\Gamma}_{p} + g^{(1)}_{p} + z_p \lambda. \nonumber
\end{eqnarray}
Since $z_p$ can be any integer with a uniform prior, we have 
$$\mathbb{P} \big( \widetilde{\Gamma}_{p} = g \big| {\Gamma}_{p} \big) = \mathbb{P} \big(\widetilde{\Gamma}_{p} = g + \lambda \big| {\Gamma}_{p} \big), $$
and the result follows.
\end{proof}

\begin{proof}[Proof of Lemma \ref{remark3}]
Using \eqref{eqn_H_k_est}, \eqref{eqn_gamma_define} and \eqref{eqn_gamma_true_define} in \eqref{eqn_gamma_hat} we can obtain:
\begin{equation} \label{eqn_gamma_hat_simplify}
\widehat{\Gamma}_{p} = \lambda \big \lfloor {\Gamma}_{p} \big/ \lambda \big\rfloor .
\end{equation}
Since ${\Gamma}_{p}$ is zero mean, Gaussian distributed (from Lemma \ref{remark1}), it follows that $\widehat{\Gamma}_{p}$ is wrapped Gaussian distributed (see Fig.~\ref{Fig_gamma_distortion}). Correspondingly, using the properties of wrapped Gaussian distribution \cite{mardia1999directional}, it follows that $\mathbb{E} \{ \widehat{\Gamma}_{p} \} = e^{-\sigma_{\Gamma}^2/2}$. Finally using the sample mean estimate of $\mathbb{E} \{ \widehat{\Gamma}_{p} \}$, the result follows.
\end{proof}

\begin{proof}[Proof of Lemma \ref{remark4}]
The distortion between ${\Gamma}_{p}, \widehat{\Gamma}_{p}$ is given by:
\begin{multline*}
\int_{-\infty}^{\infty} \mathbb{P} \big( {\Gamma}_{p} = g \big) {\Big( \widehat{\Gamma}_{p} - {\Gamma}_{p} \Big)}^2 {\rm d} g \\
= \sum_{z \in \mathbb{Z}} \int_{-\frac{\lambda}{2}}^{\frac{\lambda}{2}} \mathbb{P} \big( {\Gamma}_{p} = g + z \lambda \big) {\big( z \lambda \big)}^2 {\rm d} g , \nonumber
\end{multline*} 
which follows from \eqref{eqn_gamma_hat_simplify}. Finally, since ${\Gamma}_{p}$ is Gaussian distributed with a variance of $ \sigma_{\Gamma}^2$ from Lemma \ref{remark1}, the result follows.
\end{proof}

\section{} \label{appdix0}
\begin{proof}[Proof of Lemma \ref{Th_method1_estimation_simpl}]
When $\gamma \approx 1$, from \eqref{eqn_channel_resp}, we have that: $\bar{{{h}}}_{p,k} \approx {{h}}_{p,k} \approx {b}_{k}$ for all $p \in \mathcal{P}$. Additionally, in strongly LoS channels we have $b_k \approx b_{k+a}$ for $|a| \leq 2$, i.e., channel has a high coherence bandwidth. Therefore, using \eqref{eqn_Hk_gain_synchronized}, we can write:
\begin{eqnarray}
\sum_{k=1}^{K-1} \bar{h}_{p,k} \bar{h}_{p,k+1}^{*} \approx \sum_{k=1}^{K-1} {|b_{k}|}^2  e^{{\rm j} \frac{{\rm j} 2 \pi \tau}{T_{\rm s}}}, \label{eqn_remark5_tau_support}
\end{eqnarray}
where we neglected the other terms which vanish as $\gamma \rightarrow 1$. Thus, from \eqref{eqn_remark5_tau_support}, we conclude that \eqref{eqn_CPE_baseline2_tau} can provide a good initial estimate $\bar{\tau}_p$ of $\tau_p$, i.e., $|\tau_p - \bar{\tau}_p|$ is small. Correspondingly we can write:
\begin{eqnarray} \label{eqn_slow_phase_variation}
\bar{h}^{*}_{p,k} \bar{b}_k e^{- {\rm j} 2 \pi f_k \bar{\tau}_p} \approx {|b_k|}^2 e^{{\rm j} (2 \pi f_k (\tau_p-\bar{\tau}_p) + \psi_p)}.
\end{eqnarray}
In other words, the magnitude and phase angle of $\bar{h}^{*}_{p,k} \bar{b}_k e^{- {\rm j} 2 \pi f_k \bar{\tau}_p}$ has a very slow variation with $k$, which shall be exploited below. Note that, we can simplify \eqref{eqn_timing_err_est_method1} as:
\begin{flalign} \label{eqn_timing_err_est_method11}
& \widehat{\tau}_p, \widehat{\psi}_p = \argmax_{|\tau | \leq \frac{\kappa T_{\rm s}}{K}, |\psi| < \pi} \!\! \bigg\{ & \nonumber \\
& \qquad \qquad \sum_{k \in \bar{\mathcal{K}}} \mathrm{Re} \Big\{ e^{-{\rm j} \big(2 \pi f_k \tau + \psi - \angle [ {\bar{{{h}}}_{p,k}}^{*} \bar{b}_k ]\big)} \Big\} \Big| {\bar{{{h}}}_{p,k}}^{*} \bar{b}_k \Big| \bigg\}, \!\!\!\!\! &
\end{flalign}
where $\bar{\mathcal{K}} = \{k \in \mathcal{K} \big| | \bar{b}_k | > 0 \}$. Let us define:
\begin{flalign} \label{eqn_phase_err_defn}
& \bar{\varphi}_{p,k}(\tau, \psi) \triangleq 2 \pi f_k (\tau - \bar{\tau}_p) + \psi - \widetilde{\mathscr{U}}\big[ \angle  \big( {\bar{{{h}}}_{p,k}}^{*} \bar{b}_k e^{- {\rm j} 2 \pi f_k \bar{\tau}_p} \big) \big] & \end{flalign}
where $\widetilde{\mathscr{U}}(\cdot)$ is as defined in the lemma statement. Applying this to \eqref{eqn_timing_err_est_method11}, we get:
\begin{flalign}
& \widehat{\tau}_p, \widehat{\psi}_p = \argmax_{|\tau | \leq \frac{\kappa T_{\rm s}}{K}, |\psi| < \pi} \bigg\{ \sum_{k \in \bar{\mathcal{K}}} \cos \big[ \bar{\varphi}_{p,k}(\tau, \psi) \big] \Big| {\bar{{{h}}}_{p,k}}^{*} \bar{b}_k \Big| \bigg\} \!\!\!\!\! & \label{eqn_ML_est_6}
\end{flalign}
When $\gamma \approx 1$, from \eqref{eqn_channel_resp}, we additionally have that at the true CPE and timing offsets: $\bar{{{h}}}_{p,k} e^{{\rm j} (2 \pi f_k {\tau}_q + \psi_q)} \approx {{{b}}}_{k}$ for all $p \in \mathcal{P}$. Correspondingly for any $k \in \bar{\mathcal{K}}$, we have that $\cos(\bar{\varphi}_{p,k}({\tau}_p, \psi_p)) \approx 1$. Since $({\tau}_p, \psi_p)$ is a feasible solution to \eqref{eqn_ML_est_6}, it follows that $\cos[\bar{\varphi}_{p,k}(\widehat{\tau}_p, \widehat{\psi}_p)] \approx 1$ is also true at the optimal solution $(\widehat{\tau}_p, \widehat{\psi}_p)$ to \eqref{eqn_ML_est_6}. In other words, there exists an integer $z_p \in \mathbb{Z}$ such that for all $k \in \bar{\mathcal{K}}$, $\bar{\varphi}_{p,k}(\widehat{\tau}_p, \widehat{\psi}_p) \approx 2 \pi z_p$. Note that we use the same integer $z_p$ for all sub-carriers $k \in \bar{\mathcal{K}}$ since $\bar{\varphi}_{p,k}(\tau, \psi)$ varies smoothly with $k$ (see \eqref{eqn_slow_phase_variation}) and a transition between two different integers would violate $\cos[\bar{\varphi}_{p,k}(\widehat{\tau}_p, \widehat{\psi}_p)] \approx 1$ for at least some sub-carriers $k \in \bar{\mathcal{K}}$. Thus, since $|\bar{\varphi}_{p,k}(\tau, \psi) - 2 \pi z_p| \ll 1$ at the optimal solution to \eqref{eqn_ML_est_6}, we can reduce \eqref{eqn_ML_est_6} to: 
\begin{flalign} \label{eqn_ML_est_7}
& \argmax_{|\tau | \leq \frac{\kappa T_{\rm s}}{K}, |\psi| < \pi} \bigg\{ \sum_{k \in \bar{\mathcal{K}}} \bigg[1 - \frac{{\big(\bar{\varphi}_{p,k}(\tau, \psi) - 2 \pi z_p\big)}^2}{2} \bigg] \big| {\bar{{{h}}}_{p,k}}^{*} \bar{b}_k \big| \bigg\} . &
\end{flalign}
where we use the fact that for any real $x$:
\begin{eqnarray}
\cos(x) \approx 1 - x^2/2 \ \text{if $|x| \ll 1$} \nonumber \\
\cos(x) \geq 1 - x^2/2 \ \text{otherwise} \nonumber
\end{eqnarray}
Applying \eqref{eqn_phase_err_defn} to \eqref{eqn_ML_est_7} and, without loss of generality, absorbing the $2 \pi z_p$ into $\psi$, \eqref{eqn_CPE_timing_simpl_method1} follows.
\end{proof}

\begin{proof}[Proof of Lemma \ref{Th_ML_estimation}]
For given prior information: $$\star = \{\bar{{{h}}}_{q,k}, \widehat{\tau}_q, \widehat{\psi}_q | q \in \mathcal{Q}_p\}$$ such that $\widehat{\tau}_q = {\tau}_q, \widehat{\psi}_q = {\psi}_p$, we are interested to find the ML estimates:
\begin{flalign} \label{eqn_ML_est_0}
& \widehat{\tau}_p, \widehat{\psi}_p = \argmax_{\tau, \psi} \bigg\{ \max_{\{{b}_{k} | k \in \mathcal{K}\}} \bigg[ \mathbb{P}({\psi}_p = \psi) \mathbb{P}({\tau}_p = \tau) & \nonumber \\
& \qquad \prod_{k \in \mathcal{K}} \bigg[  \mathbb{P} \big( \bar{{{h}}}_{p,k} \big| {\psi}_p = \psi, {\tau}_p = \tau, {{b}}_{k}, \star \big) & \nonumber \\
& \qquad  \Big( \prod_{q \in \mathcal{Q}_p} \mathbb{P} \big( \bar{{{h}}}_{q,k} \big| {{{b}}}_{k}, \star \big) \Big) \mathbb{P}({{b}}_{k}) \bigg] \bigg\}, &
\end{flalign}
where we use the fact that ${{{d}}}_{p,k}$ are independent for each $k$ in \eqref{eqn_channel_resp}. Noting that ${\tau}_p \sim {\rm Uni}(-\frac{\kappa T_{\rm s}}{K}, \frac{\kappa T_{\rm s}}{K})$, ${\psi}_p \sim {\rm Uni}(-\pi, \pi)$ (see Section \ref{subsec_timing_CFO_dist}), ${{{d}}}_{p,k} e^{-{\rm j} {\psi}_p} e^{-{\rm j} 2 \pi f_k \tau_p}$ are i.i.d. complex Gaussian vectors for each $p \in \mathcal{P}$ and $b_k$ has a uniform prior (see \eqref{eqn_channel_resp} and \eqref{eqn_H_k_est}), we can rewrite the right hand side of \eqref{eqn_ML_est_0} as:
\begin{flalign}
& \argmax_{ |\tau | \leq \frac{\kappa T_{\rm s}}{K}, |\psi| < \pi} \Bigg\{ \max_{\sum_k {|{b}_k|}^2 = \gamma K} \prod_{k \in \mathcal{K}} \Bigg[  & \nonumber \\
& \qquad \exp \bigg\{ \frac{- {\big| {b}_{k} e^{-{\rm j} [2 \pi f_k \tau + \psi]} - \bar{{{h}}}_{p,k} \big|}^2}{ 1- \gamma } \bigg\} & \nonumber \\
& \qquad \prod_{q \in \mathcal{Q}_p} \exp \bigg\{ \frac{-{\big| {b}_{k} e^{-{\rm j} [2 \pi f_k \widehat{\tau}_{q} + \widehat{\psi}_{q}]} - \bar{{{h}}}_{q,k} \big|}^2}{ 1- \gamma } \bigg\} \Bigg] \Bigg\} & 
\end{flalign}
Taking the logarithm on both sides, we can can further reduce the optimization problem to:
\begin{flalign} \label{eqn_ML_est_1}
& \argmin_{ |\tau | \leq \frac{\kappa T_{\rm s}}{K}, |\psi| < \pi} \bigg\{ \min_{\sum_k {|b_k|}^2 = K \gamma} \bigg[ \sum_{k \in \mathcal{K}} \bigg( {\big| {b}_{k} e^{-{\rm j} [2 \pi f_k \tau + \psi]} - \bar{{{h}}}_{p,k} \big|}^2 & \nonumber \\
& \quad + \sum_{q \in \mathcal{Q}_p} {\big| {b}_{k} e^{-{\rm j} [2 \pi f_k \widehat{\tau}_{q} + \widehat{\psi}_{q}]} - \bar{{{h}}}_{q,k}  \big|}^2 \bigg) \bigg] \bigg\} \!\!\!\!\!\!\!\!\!\! &
\end{flalign}
The conditional optimal solution of ${b}_{k}$ in \eqref{eqn_ML_est_1} for given $\tau, \psi$ is given by:
\begin{eqnarray} \label{eqn_ML_est_4}
{b}_{k} = A \bigg[ \bar{{{h}}}_{p,k} e^{{\rm j} (2 \pi f_k \tau + \psi)} + \sum_{q \in \mathcal{Q}_p} \widehat{{{h}}}_{q,k} \bigg],
\end{eqnarray}
where we use \eqref{eqn_Hk_synchronized} and  $A$ is such that $\sum_k {|b_k|}^2 = \gamma K$. Substituting \eqref{eqn_ML_est_4} back into \eqref{eqn_ML_est_1} and removing the terms that are independent of $\tau, \psi$ from the objective function, \eqref{eqn_ML_est_1} can be reduced to:
\begin{flalign} \label{eqn_ML_est_5}
& \argmin_{|\tau | \leq \frac{\kappa T_{\rm s}}{K}, |\psi| < \pi} \Bigg\{ & \nonumber \\
& \qquad \quad \sum_{k \in \mathcal{K}} - \mathrm{Re} \bigg\{ e^{-{\rm j} (2 \pi f_k \tau + \psi)} {\bar{{{h}}}_{p,k}}^{*} \Big( \sum_{q \in \mathcal{Q}_p} \widehat{{{h}}}_{q,k} \Big) \bigg\} \Bigg\} \!\!\!\! &
\end{flalign}
Finally re-arranging the terms of \eqref{eqn_ML_est_5} we arrive at \eqref{eqn_timing_err_est} and \eqref{eqn_CPE_est}.
\end{proof}

\end{appendices}




%

\bibliographystyle{ieeetr}
\bibliography{references}

\begin{thebibliography}{10}

\bibitem{Jiang2018}
H.~Jiang, C.~Cai, X.~Ma, Y.~Yang, and J.~Liu, ``Smart home based on {WiFi}
  sensing: A survey,'' {\em IEEE Access}, vol.~6, pp.~13317--13325, 2018.

\bibitem{Ma2019}
Y.~Ma, G.~Zhou, and S.~Wang, ``{WiFi} sensing with channel state information: A
  survey,'' {\em ACM Comput. Surv.}, vol.~52, jun 2019.

\bibitem{Fang2018}
Y.~Fang, Z.~Jiang, and H.~Wang, ``A novel sleep respiratory rate detection
  method for obstructive sleep apnea based on characteristic moment waveform,''
  {\em Journal of healthcare engineering}, vol.~2018, 2018.

\bibitem{Charlton2018}
P.~H. Charlton, D.~A. Birrenkott, T.~Bonnici, M.~A.~F. Pimentel, A.~E.~W.
  Johnson, J.~Alastruey, L.~Tarassenko, P.~J. Watkinson, R.~Beale, and D.~A.
  Clifton, ``Breathing rate estimation from the electrocardiogram and
  photoplethysmogram: A review,'' {\em IEEE Reviews in Biomedical Engineering},
  vol.~11, pp.~2--20, 2018.

\bibitem{Regev2021}
N.~Regev and D.~Wulich, ``Radar-based, simultaneous human presence detection
  and breathing rate estimation,'' {\em Sensors}, vol.~21, no.~10, 2021.

\bibitem{Natarajan2021}
A.~Natarajan, H.-W. Su, C.~Heneghan, L.~Blunt, C.~O'Connor, and L.~Niehaus,
  ``Measurement of respiratory rate using wearable devices and applications to
  {COVID-19} detection,'' {\em npj Digital Medicine}, vol.~4, 2021.

\bibitem{Wang2017b}
X.~Wang, C.~Yang, and S.~Mao, ``{TensorBeat}: Tensor decomposition for
  monitoring multiperson breathing beats with commodity {WiFi},'' {\em ACM
  Trans. Intell. Syst. Technol.}, vol.~9, Sept. 2017.

\bibitem{Wang2017c}
H.~Wang, D.~Zhang, Y.~Wang, J.~Ma, Y.~Wang, and S.~Li, ``{RT-Fall}: A real-time
  and contactless fall detection system with commodity {WiFi} devices,'' {\em
  IEEE Transactions on Mobile Computing}, vol.~16, no.~2, pp.~511--526, 2017.

\bibitem{Zeng2019}
Y.~Zeng, D.~Wu, J.~Xiong, E.~Yi, R.~Gao, and D.~Zhang, ``Farsense: Pushing the
  range limit of {WiFi}-based respiration sensing with {CSI} ratio of two
  antennas,'' {\em Proc. ACM Interact. Mob. Wearable Ubiquitous Technol.},
  vol.~3, sep 2019.

\bibitem{Chen2018}
C.~Chen, Y.~Han, Y.~Chen, H.-Q. Lai, F.~Zhang, B.~Wang, and K.~J.~R. Liu,
  ``{TR-BREATH}: Time-reversal breathing rate estimation and detection,'' {\em
  IEEE Transactions on Biomedical Engineering}, vol.~65, no.~3, pp.~489--501,
  2018.

\bibitem{Hu2022}
Y.~Hu, F.~Zhang, C.~Wu, B.~Wang, and K.~J.~R. Liu, ``{DeFall}:
  Environment-independent passive fall detection using {WiFi},'' {\em IEEE
  Internet of Things Journal}, vol.~9, no.~11, pp.~8515--8530, 2022.

\bibitem{Liu2021}
J.~Liu, Y.~Zeng, T.~Gu, L.~Wang, and D.~Zhang, ``{WiPhone}: Smartphone-based
  respiration monitoring using ambient reflected {WiFi} signals,'' {\em Proc.
  ACM Interact. Mob. Wearable Ubiquitous Technol.}, vol.~5, mar 2021.

\bibitem{Niu2021}
X.~Niu, S.~Li, Y.~Zhang, Z.~Liu, D.~Wu, R.~C. Shah, C.~Tanriover, H.~Lu, and
  D.~Zhang, ``{WiMonitor}: Continuous long-term human vitality monitoring using
  commodity {Wi-Fi} devices,'' {\em Sensors}, vol.~21, no.~3, 2021.

\bibitem{Liu2014}
X.~Liu, J.~Cao, S.~Tang, and J.~Wen, ``{Wi-Sleep}: Contactless sleep monitoring
  via {WiFi} signals,'' in {\em 2014 IEEE Real-Time Systems Symposium},
  pp.~346--355, 2014.

\bibitem{Liu2015}
J.~Liu, Y.~Wang, Y.~Chen, J.~Yang, X.~Chen, and J.~Cheng, ``Tracking vital
  signs during sleep leveraging off-the-shelf {WiFi},'' in {\em Proceedings of
  the 16th ACM International Symposium on Mobile Ad Hoc Networking and
  Computing}, MobiHoc '15, (New York, NY, USA), p.~267–276, Association for
  Computing Machinery, 2015.

\bibitem{Zeng2018}
Y.~Zeng, D.~Wu, R.~Gao, T.~Gu, and D.~Zhang, ``{FullBreathe}: Full human
  respiration detection exploiting complementarity of {CSI} phase and amplitude
  of wifi signals,'' {\em Proceedings of the ACM on Interactive, Mobile,
  Wearable and Ubiquitous Technologies}, vol.~2, pp.~1--19, 09 2018.

\bibitem{Dou2021}
C.~Dou and H.~Huan, ``Full respiration rate monitoring exploiting doppler
  information with commodity {Wi-Fi} devices,'' {\em Sensors}, vol.~21, no.~10,
  2021.

\bibitem{Sameera2018}
S.~Palipana, D.~Rojas, P.~Agrawal, and D.~Pesch, ``{FallDeFi}: Ubiquitous fall
  detection using commodity {Wi-Fi} devices,'' {\em Proc. ACM Interact. Mob.
  Wearable Ubiquitous Technol.}, vol.~1, jan 2018.

\bibitem{Wang2018}
J.~Wang, L.~Zhang, Q.~Gao, M.~Pan, and H.~Wang, ``Device-free wireless sensing
  in complex scenarios using spatial structural information,'' {\em IEEE
  Transactions on Wireless Communications}, vol.~17, no.~4, pp.~2432--2442,
  2018.

\bibitem{Kotaru2015}
M.~Kotaru, K.~Joshi, D.~Bharadia, and S.~Katti, ``{SpotFi}: Decimeter level
  localization using {WiFi},'' {\em SIGCOMM Comput. Commun. Rev.}, vol.~45,
  p.~269–282, aug 2015.

\bibitem{Ma2018}
Y.~Ma, G.~Zhou, S.~Wang, H.~Zhao, and W.~Jung, ``{SignFi}: Sign language
  recognition using {WiFi},'' {\em Proc. ACM Interact. Mob. Wearable Ubiquitous
  Technol.}, vol.~2, mar 2018.

\bibitem{Xie2019}
Y.~Xie, Z.~Li, and M.~Li, ``Precise power delay profiling with commodity
  {Wi-Fi},'' {\em IEEE Transactions on Mobile Computing}, vol.~18, no.~6,
  pp.~1342--1355, 2019.

\bibitem{Li2022}
W.~Li, M.~J. Bocus, C.~Tang, R.~J. Piechocki, K.~Woodbridge, and K.~Chetty,
  ``On {CSI} and passive {Wi-Fi} radar for opportunistic physical activity
  recognition,'' {\em IEEE Transactions on Wireless Communications}, vol.~21,
  no.~1, pp.~607--620, 2022.

\bibitem{IEEEWiFi2020}
``{IEEE} standard for information technology--telecommunications and
  information exchange between systems - local and metropolitan area
  networks--specific requirements - part 11: Wireless lan medium access control
  {(MAC)} and physical layer {(PHY)} specifications,'' {\em IEEE Std
  802.11-2020 (Revision of IEEE Std 802.11-2016)}, pp.~1--4379, 2021.

\bibitem{Schmidl1997}
T.~M. Schmidl and D.~C. Cox, ``Robust frequency and timing synchronization for
  {OFDM},'' {\em IEEE Transactions on Communications}, vol.~45, pp.~1613--1621,
  Dec 1997.

\bibitem{Nasir2016}
A.~A. Nasir, S.~Durrani, H.~Mehrpouyan, S.~D. Blostein, and R.~A. Kennedy,
  ``Timing and carrier synchronization in wireless communication systems: a
  survey and classification of research in the last 5 years,'' {\em EURASIP
  Journal on Wireless Communications and Networking}, vol.~2016, p.~180, Aug
  2016.

\bibitem{Sourour2004}
E.~Sourour, H.~El-Ghoroury, and D.~McNeill, ``Frequency offset estimation and
  correction in the {IEEE 802.11a WLAN},'' in {\em IEEE Vehicular Technology
  Conference (VTC)}, vol.~7, pp.~4923--4927 Vol. 7, 2004.

\bibitem{Petrovic2007}
D.~Petrovic, W.~Rave, and G.~Fettweis, ``Effects of phase noise on {OFDM}
  systems with and without {PLL}: Characterization and compensation,'' {\em
  IEEE Transactions on Communications}, vol.~55, pp.~1607--1616, Aug 2007.

\bibitem{Ratnam2020}
V.~V. Ratnam, ``Performance of analog beamforming systems with optimized phase
  noise compensation,'' {\em IEEE Transactions on Signal Processing}, vol.~68,
  pp.~5334--5348, 2020.

\bibitem{Lee2007}
I.-G. Lee and S.-K. Lee, ``Efficient automatic gain control algorithm and
  architecture for wireless {LAN} receivers,'' {\em Journal of Systems
  Architecture}, vol.~53, pp.~379--385, 07 2007.

\bibitem{Cheng2014}
X.~Cheng, G.~Xie, Z.~Zhang, and Y.~Yang, ``Fast-settling feedforward automatic
  gain control based on a new gain control approach,'' {\em IEEE Transactions
  on Circuits and Systems II: Express Briefs}, vol.~61, no.~9, pp.~651--655,
  2014.

\bibitem{Jang_VTC2010}
J.~H. Jang and H.~J. Choi, ``A fast automatic gain control scheme for {3GPP LTE
  TDD} system,'' in {\em IEEE Vehicular Technology Conference - Fall},
  pp.~1--5, 2010.

\bibitem{ART001948424}
S.-K. Jin, S.-H. Yoon, and S.~Dae-Kyo, ``Performances of various {AGC}
  algorithms for {IEEE802.11p WAVE},'' {\em Journal of IKEEE}, vol.~18, no.~4,
  pp.~502--508, 2014.

\bibitem{DBScan}
M.~Ester, H.-P. Kriegel, J.~Sander, and X.~Xu, ``A density-based algorithm for
  discovering clusters in large spatial databases with noise,'' in {\em Proc.
  of International Conference on Knowledge Discovery and Data Mining},
  pp.~226--231, AAAI Press, 1996.

\bibitem{IEEE_11az}
``{IEEE} draft standard for information technology--telecommunications and
  information exchange between systems local and metropolitan area
  networks--specific requirements - part 11: Wireless {LAN} medium access
  control {(MAC)} and physical layer {(PHY)} specifications -- enhancemens for
  positioning,'' {\em IEEE P802.11az D7.0}, pp.~1--287, 2022.

\bibitem{Xu2013}
Z.~Xu, B.~Huang, and S.~Xu, ``Robust phase unwrapping algorithm,'' {\em
  Electronics Letters}, vol.~49, no.~24, pp.~1565--1567, 2013.

\bibitem{Muirhead1982}
{\em The Multivariate Normal and Related Distributions}, ch.~1, pp.~1--49.
\newblock John Wiley \& Sons, Ltd, 1982.

\bibitem{mardia1999directional}
K.~Mardia and P.~Jupp, {\em Directional Statistics}.
\newblock Wiley series in probability and statistics, Wiley, 1999.

\end{thebibliography}

\begin{IEEEbiography}[{\includegraphics[width=1in,height=1.25in,clip,keepaspectratio]{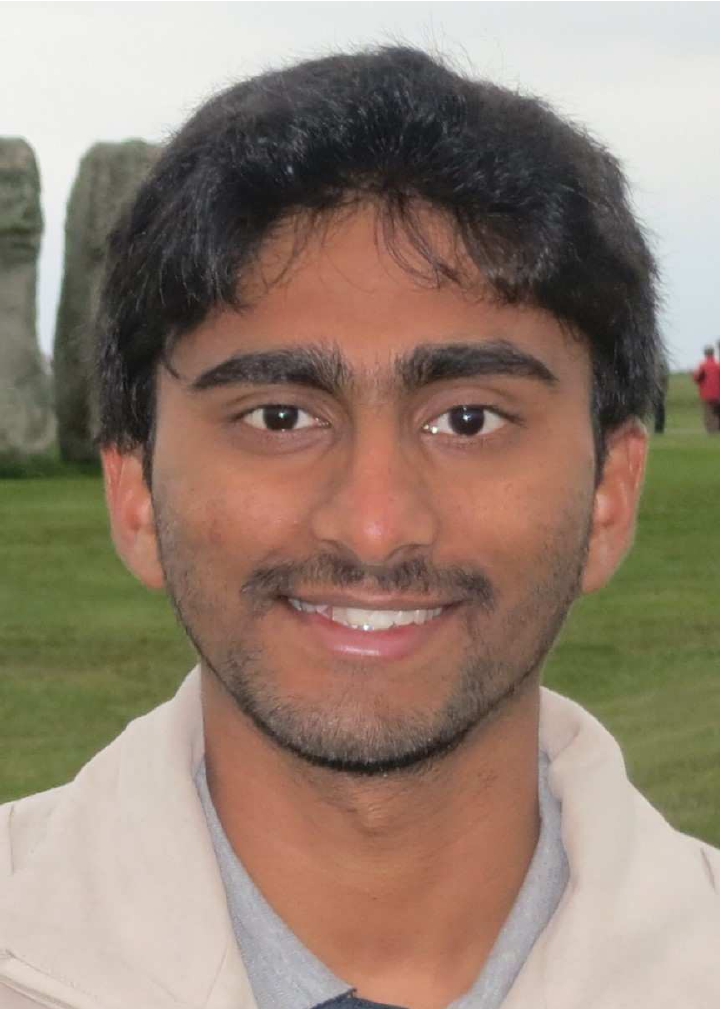}}]{Vishnu V. Ratnam} (S'10--M'19--SM'22) received the B.Tech. degree (Hons.) in electronics and electrical communication engineering from IIT Kharagpur, Kharagpur, India in 2012, where he graduated as the Salutatorian for the class of 2012. He received the Ph.D. degree in electrical engineering from University of Southern California, Los Angeles, CA, USA in 2018. He currently works as a Staff Research Engineer II in the Standards and Mobility Innovation Lab at Samsung Research America, Plano, Texas, USA. His research interests are in Wi-Fi standards, wireless sensing, AI for wireless, mm-Wave and Terahertz communication, massive MIMO, and resource allocation problems in multi-antenna networks. 
Dr. Ratnam was the recipient of the Best Student Paper Award with the IEEE International Conference on Ubiquitous Wireless Broadband (ICUWB) in 2016, the Bigyan Sinha memorial award in 2012 and is a member of the Phi-Kappa-Phi honor society.
\end{IEEEbiography}

\begin{IEEEbiography}[{\includegraphics[width=1in,height=1.25in,clip,keepaspectratio]{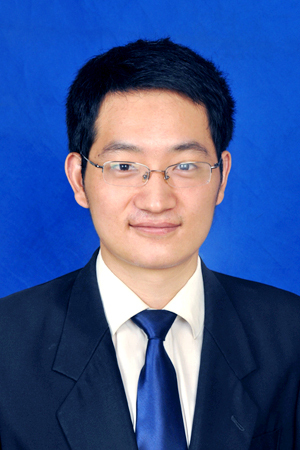}}]{Hao Chen} received his B.S. and M.S. degrees in Information Engineering from Xi'an Jiaotong University, Shaanxi, in 2010 and 2013. He received the Ph.D. degree in Electrical Engineering from University of Kansas, Lawrence, KS, in 2017. He currently works as a Senior Staff Engineer with the Standards and Mobility Innovation Laboratory, Samsung Research America, where he is working on algorithm design and prototyping of AI for wireless communication, wireless sensing, and localization. His research interests include network optimization, machine learning, and 5G cellular systems.
\end{IEEEbiography}

\begin{IEEEbiography}[{\includegraphics[width=1in,height=1.25in,clip,keepaspectratio]{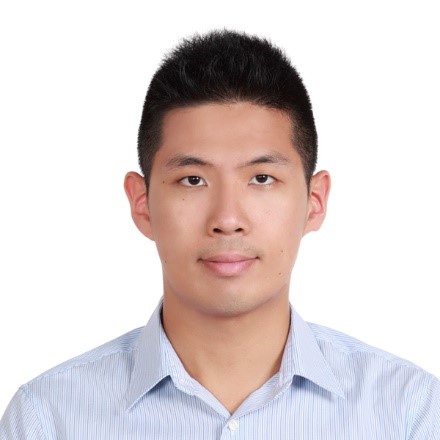}}]{Hao-Hsuan Chang} received the Ph.D. degree in electrical and computer engineering from Virginia Tech, USA, in 2021, the M.S. degree in communication engineering from National Taiwan University, Taiwan, in 2015, and the B.Sc. degree in electrical engineering from National Taiwan University, Taiwan, in 2013. He is currently the Senior Research Engineer in Samsung Research America. His research interests include dynamic spectrum access, wireless sensing, and machine learning for wireless communications.
\end{IEEEbiography}

\begin{IEEEbiography}[{\includegraphics[width=1in,height=1.25in,clip,keepaspectratio]{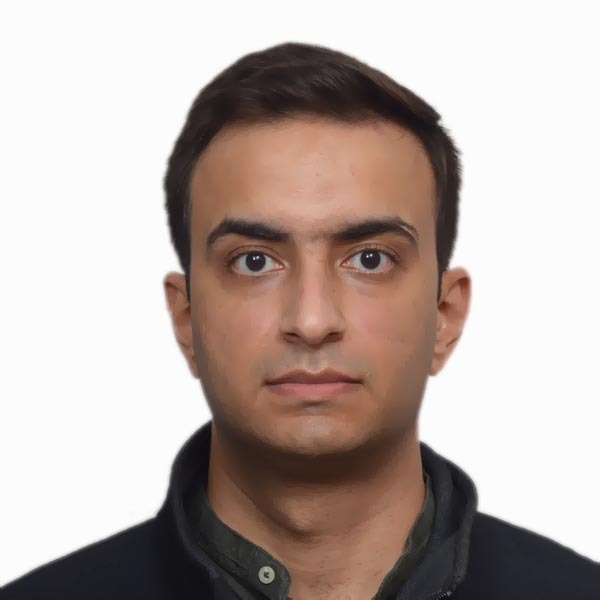}}]{Abhishek Sehgal} (S'15--M'19)  received the B.E. degree in instrumentation technology from Visvesvaraya Technological University, Belgaum, India, in 2012, and the M.S. and Ph.D. degrees in electrical engineering from The University of Texas at Dallas, Richardson, TX, USA, in 2015 and 2019, respectively. He is currently a Staff Research Engineer II with the Standards and Mobility Innovation (SMI) Laboratory, Samsung Research America. His research interests include real-time signal processing, pattern recognition, machine learning, and wireless communication. His awards and honors include the Second Place Award for the Hearables Challenge organized by the National Science Foundation (NSF), in 2017, and is a member of the Phi-Kappa-Phi society.
\end{IEEEbiography}

\begin{IEEEbiography}[{\includegraphics[width=1in,height=1.25in,clip,keepaspectratio]{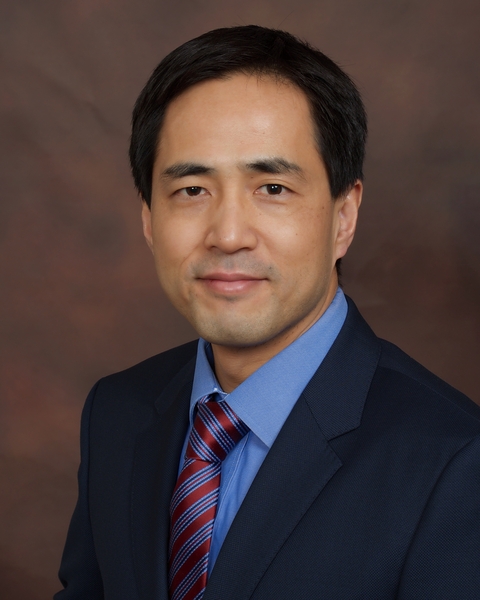}}]{Jianzhong (Charlie) Zhang} (S'00-M'03-SM'09-F'16) received the Ph.D. degree from the University of Wisconsin, Madison WI, USA. He is currently a Senior Vice President at Samsung Research America, where he leads research, prototyping, and standardization for 5G/6G and other wireless systems. He is also a Corporate VP and head of the  global 6G team at Samsung Research. He is currently serving as the ATIS North America Next-G Alliance Full Member Group Vice Chair. Previously, he was the Board Chair of the FiRa Consortium  from May 2019 to May 2023, and the Vice Chairman of the 3GPP RAN1 working group from 2009 to 2013, where he led development of LTE and LTE-Advanced technologies. He worked for Nokia Research Center and Motorola Mobility for 6 years before joining Samsung in 2007.  He received his Ph.D. degree from the University of Wisconsin, Madison.  Dr. Zhang is a Fellow of IEEE.
\end{IEEEbiography}

\end{document}